\def\expect{{\mathbb  E}}
\def\Pr{{\mathbb P}}
\def\RR{{\mathbb R}}
\def\eqdef{\triangleq}
\newtheorem{theorem}{Theorem}
\newtheorem{lemma}{Lemma}
\newtheorem{corollary}{Corollary}
\newtheorem{definition}{Definition}
\newcounter{example}[section]
  \newcounter{remark}[section]
 \newcounter{experiment}[section]
\newenvironment{experiment}[1][]{\refstepcounter{experiment}\par\medskip
   \noindent \textbf{Experiment~\theexperiment. #1} \rmfamily}{\medskip}
\begin{document}

\date{}

\title{On Resource Pooling and Separation for LRU Caching}

\author{Jian Tan,\;\;Guocong Quan,\;\;Kaiyi Ji,\;\;Ness Shroff\\
            The Ohio State University\\
            Columbus, OH 43210\\
            \{tan.252, quan.72, ji.367, shroff.11\}@osu.edu}

\maketitle

\begin{abstract}
Caching systems using the Least Recently Used (LRU) principle have now become ubiquitous. A fundamental question for these systems is whether the cache space should be pooled together or divided
to serve multiple flows of data item requests
in order to minimize the miss probabilities. In this paper, we show that there is no 
straight yes or no answer to this question, depending on complex combinations 
of critical factors, 
including, e.g.,  request rates, overlapped data items
across different request flows, data item popularities and their sizes. 
Specifically,  we characterize the asymptotic miss probabilities for multiple competing request flows under
resource pooling and separation for LRU caching when the cache size is large.  

Analytically, we show that it is asymptotically optimal to jointly serve multiple flows
 if their data item sizes and popularity distributions are similar and their arrival rates 
do not differ significantly; the self-organizing property of LRU caching automatically optimizes the resource allocation among them asymptotically. 
Otherwise, separating these flows could be better, e.g., when data sizes vary significantly. 
We also quantify critical points beyond which resource pooling is better than separation for each of the flows when the overlapped data items exceed certain levels.
Technically, we generalize existing results on the asymptotic miss probability of LRU caching for a 
broad class of heavy-tailed distributions and extend them to 
multiple competing flows with varying data item sizes, which also validates the Che approximation under certain conditions.  
These results provide new insights on improving the performance of caching systems. 
\end{abstract}

\section{Introduction}\label{s:intro}
Caching systems using the Least Recently Used (LRU) principle are already widely deployed 
but need to efficiently scale to support emerging data applications.
They have very different stochastic dynamics~\cite{che2002,Rosensweig:2010,Fricker:2012,Choungmo2014,BergerGSC14,jung2003modeling,Gast:2015,leaseMiss,Berger:2015:mama,Ferragut:2016} than well-studied queueing systems.  
One cannot apply the typical intuition of resource pooling for queueing, e.g.,~\cite{kella89,Harrison1999,bell2001,stolyar2004, borgs2013},  to caching.   
To serve multiple flows of data item requests,
a fundamental question is 
whether the cache space should be pooled together or divided (see Fig.~\ref{fig:singleServer})
in order to minimize the miss probabilities.
\begin{figure}[h]\vspace{-0.3cm}
\centering
\includegraphics[width=4.5cm]{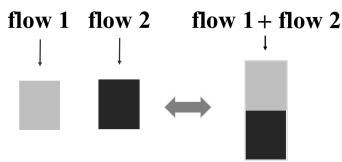}\vspace{-0.27cm}
\caption{Flows served separately and jointly}\label{fig:singleServer}
\vspace{-0.2cm}
\end{figure}

A request is said to ``miss'' if the corresponding data item is not found in the cache; otherwise a ``hit'' occurs.  
For a web service each miss often incurs subsequent work at a backend database,  resulting in overhead as high as a few milliseconds or even seconds~\cite{zExpander}.   
A study on Facebook's memcached workloads shows that
a small percentage of miss ratio on one server can trigger 
millions of requests to
the database per day~\cite{JiangSong,cacheDelay}. 
Thus, even a minor increase in the hit ratio can significantly improve system performance. 
To further motivate the problem, we examine the cache space
 allocation for in-memory key-value storage systems. 

\subsection{Background and current practice}

In-memory cache processing can
greatly expedite data retrieval, since data are kept in Random Access Memory (RAM).  
In a typical key-value cache system, e.g., Memcached~\cite{memcached,memc3:nsdi13,memcachNSDI}, a data item is added 
to the cache after
a client has requested it and failed.  When the cache is full,  an old data item needs to be evicted to 
make room for the new one. This selection is determined by the caching algorithm. 
Different caching algorithms have been proposed~\cite{LIRS,ARC}.
However, due to the cost of tracking access history,  
often only LRU or its approximations~\cite{AndrewOS},  are adopted~\cite{JiangSong}. 
The LRU algorithm replaces the data item that has not been used for the longest period of time. 

The current engineering practice is to organize servers into pools
based on applications and data domains~\cite{JiangSong,memcachNSDI,TAO}. On a server, the cache space
is divided into isolated slabs according to data item sizes~\cite{memcached, cacheDelay}.  
Note that different servers and slabs have separate LRU lists.  
These solutions have 
yielded good performance~\cite{memcached, cidon2016cliffhanger, cacheDelay}, 
through coarse level control on resource pooling and separation. 
However, it is not clear whether these rules of thumb are optimal allocations, or whether 
 one can develop simple solutions to further improve the performance.

\subsection{The optimal strategy puzzle}
These facts present a dilemma.
On the one hand,  multiple request flows benefit from resource pooling. For example,  a shared cache space
that provides sufficiently high hit ratios for two flows 
can improve the utilization of the limited RAM space, especially when 
the two flows contain overlapped common data items so that a data item 
brought into cache by one flow can be directly used by others. 
On the other hand, resource separation facilitates capacity planning 
for different flows and ensures adequate quality of service
for each.  For example, a dedicated cache space can prevent one flow with a high
request rate from evicting too many data items of another competing flow on the same cache~\cite{JiangSong}.   

This dilemma only scratches the surface of whether resource pooling or separation is better for caching. Four critical factors complicate the problem and jointly impact the cache miss probabilities (a.k.a. miss ratios), including request rates, overlapped data items
across different request flows, data item popularities and their sizes.  Depending on the setting, they may lead to different conclusions.  Below we demonstrate the 
complexity of the optimal strategy using three examples, showing that
resource pooling can be asymptotically equal to, better or worse than separation, respectively.  
Consider two independent flows ($1$ and~$2$) of requests with Poisson arrivals of rates $\nu_1$ and $\nu_2$, respectively. 
The data items of the two flows do not overlap and have unit sizes unless explicitly specified. Their popularities follow truncated Zipf's distributions,  $p^{(1)}_i=c_1/i^{\alpha_1}$ and $p^{(2)}_j=c_2/j^{\alpha_2}, 1\leq i,j \leq N$, where $i,j$ are the indeces of the data items
of flow~$1$ and $2$, respectively.   
For pooling, two flows share the whole cache.
For separation, the cache is partitioned into two parts using fractions $u_1$ and $u_2$, to serve flow~$1$ and~$2$ separately, $u_1 + u_2=1, u_1, u_2 \geq 0$.

\vspace{0.15cm}
\noindent {\bf Case 1: Asymptotic equivalence}

\noindent
The optimal resource separation scheme
has recently been shown to be better than pooling~\cite{chu2016allocating} under certain assumptions based on the Che approximation~\cite{che2002}. 
However, it is not clear whether the difference is significant or not, especially when the cache size is large (a typical scenario). 
The first example shows that they can be quite close. Notably resource pooling is adaptive and need not optimize separation fractions $u_1$.  
For $\alpha_1=1.5, \alpha_2=4.0, \nu_1=0.1, \nu_2=0.9, N=10^6$, we plot the overall miss probabilities under resource pooling and separation
in Fig.~\ref{fig:distribution1}, respectively. The optimal ratio $u_1$ for separation is obtained numerically by an exhaustive search.  
\begin{figure}[h] 
\vspace{-0.3cm}
\centering
\includegraphics[width=6.8cm]{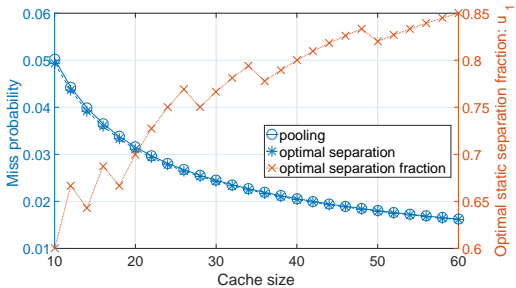}\vspace{-0.3cm}
\caption{Asymptotically equal miss ratios} \label{fig:distribution1}
\vspace{-0.3cm}
\end{figure}
When the cache size is small, the optimal separation strategy achieves a better miss probability than resource pooling.
However, for large cache sizes, the miss probabilities are indistinguishable. This is not an coincidence, as shown by Theorem~\ref{lemma:samealpha}.  
Note that the cache sizes take integer values, thus $u_1$ varying up and down. 

\vspace{0.15cm}
\noindent {\bf  Case 2: Pooling is better} 

\noindent
The previous example shows that resource pooling can adaptively achieve the best separation fraction when the cache space is large. Consider two flows with $\alpha_1=\alpha_2=2, N=10^6$ and time-varying Poisson request rates. 
For $T=10^6$, let $\nu_1=0.1, \nu_2=0.9$ in the time interval $[(2k-1)T+1,2kT]$ and $\nu_1=0.9,\nu_2=0.1$ in $[2kT+1,(2k+1)T]$, $k=1,2,\cdots$.
\begin{figure}[h] 
\vspace{-0.3cm}
\centering
\includegraphics[width=6.8cm]{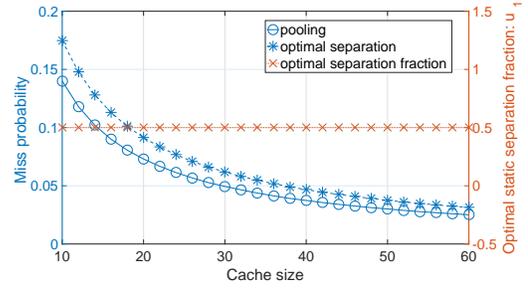}\vspace{-0.3cm}
\caption{Benefits of pooling due to self-organization} \label{fig:requestRate}
\vspace{-0.3cm}
\end{figure} 
The simulation results in Fig~\ref{fig:requestRate} show that resource pooling achieves a smaller miss probability, primarily attributing to self-organization. The optimal static separation ratio in this case is $u_1=0.5$ due to symmetry.

\noindent {\bf Case 3: Separation is better}

\noindent
Assume that the data items from flow~$1$ and flow~$2$ have different sizes $1$ and $4$, respectively, with $N=10^6, \alpha_1=\alpha_2=2, \nu_1=\nu_2=0.5$.
The simulation results in Fig.~\ref{fig:diffSize} show that the optimal separation yields a better performance due to 
varying data item sizes, which is supported by Theorem~\ref{lemma:samealpha}.
This may explain why in practice it is beneficial to separate cache space according to
applications, e.g.,  text and image objects, which could have significantly different item sizes~\cite{memcachNSDI, xu2014characterizing}.
What if the data item sizes are equal? Fig.~\ref{fig:distribution1} is an example that separation is better when 
\begin{figure}[h] 
\vspace{-0.3cm}
\centering
\includegraphics[width=6.8cm]{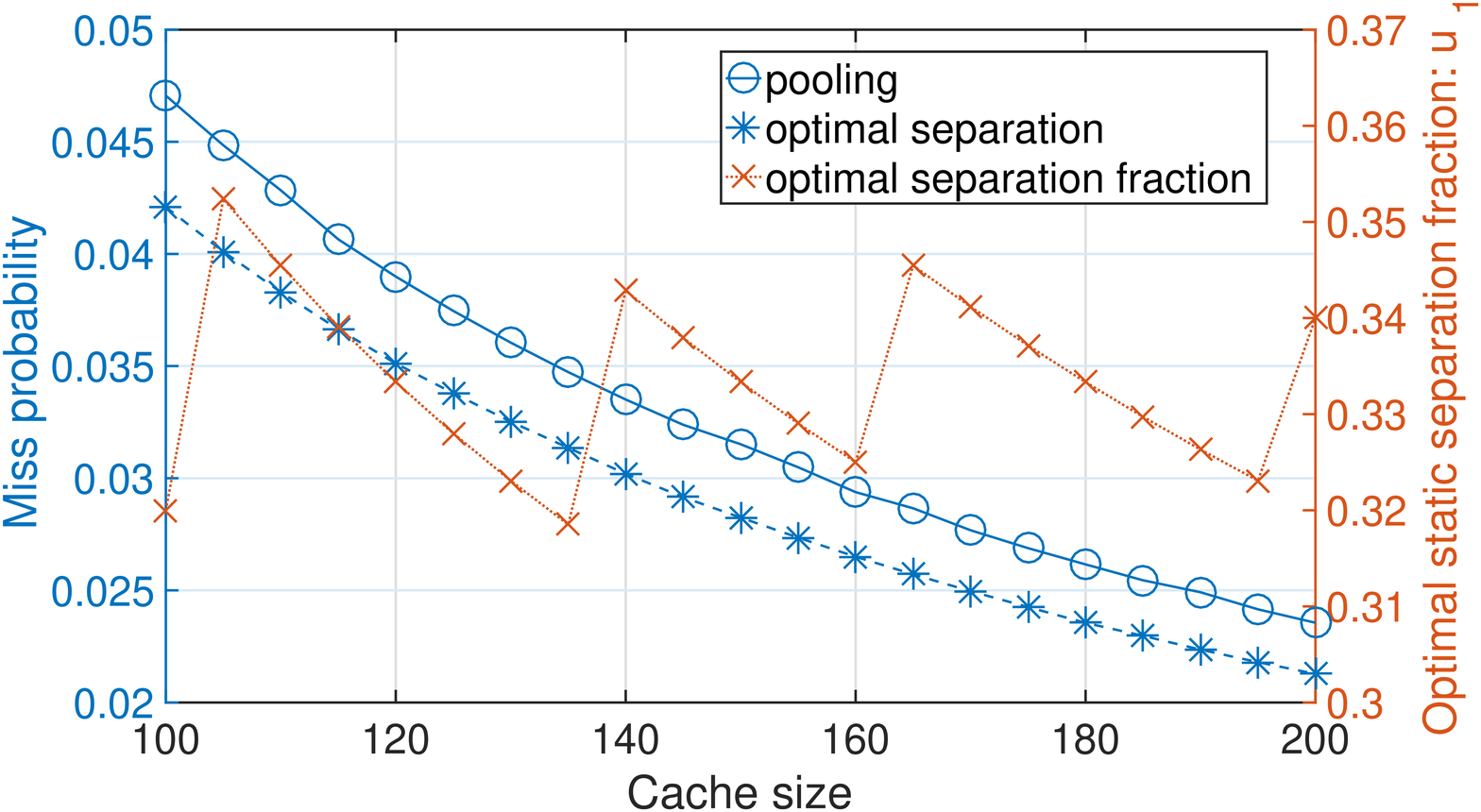}\vspace{-0.3cm}
\caption{Benefits of separation due to isolation} \label{fig:diffSize}
\vspace{-0.3cm}
\end{figure}
the cache space is small even with equal data 
item sizes. However, a small cache may not be typical for caching systems.  
These examples motivate us to systematically study the miss probabilities for competing flows with different rates,  distributions, and partially overlapped data items of varying sizes. Our analytical results can be used to explain the puzzling performance differences demonstrated in the previous three examples.

\subsection{Summary of contributions}

\noindent (1) An analytical framework under the independent reference model (IRM)~\cite{Coffman} is proposed to address four critical factors for LRU caching: request rates, distributions, data item sizes and the overlapped data items across different flows. 
We generalize the existing results~\cite{Jelenkovic99asymptoticapproximation,Jelenkovic:2004} on the asymptotic miss probability of LRU caching from Zipf's law to a broad class of heavy-tailed distributions, including, e.g.,  regularly varying and heavy-tailed Weibull distributions.  More importantly, our results can characterize miss probabilities of 
multiple competing flows with varying data item sizes when they share a common large cache space.  These asymptotic results validate the Che approximation~\cite{che2002} under certain conditions. 

\noindent (2) Based on the miss probabilities for both the aggregated and the individual flows, we 
provide guidance on whether multiple competing flows should be served together or not.  
First, we show that when the flows have similar distributions and equal data item sizes, 
the self-organizing property of LRU can adaptively search for the optimal resource allocation for shared flows. 
As a result, the overall miss probability of the aggregated flows is asymptotically equal to the miss probability using 
the optimal static separation scheme. 
In addition, 
if the request rates of these flows are close,  the miss probabilities of individual flows when served jointly differ only by a small constant factor 
compared to the case when they are served separately.  
Otherwise, either some of the request flows will be severely penalized or the total miss ratio will become worse.  In that case, it is better to separately serve them.
Second, we consider multiple flows with overlapped data.  When the overlapped data items exceed a certain level,  there exists a region such that every flow can get a better hit ratio. However, 
if not in this region, e.g., when the arrival rates are very different, some flows will be negatively impacted by other competing flows. Based on the analysis, we discuss engineering implications. 

\noindent (3) Extensive simulations are conducted to verify the theoretical results. We design a number of simulations, with different purposes and emphases, 
and show an accurate match with our theoretical results.

\subsection{Related work}

LRU caching is a self-organizing list~\cite{allenM78,albersW98,bitner,blumCK,EstivillW,gonnetMS,rivest,hesterH,arCT} that has been extensively studied. There are two basic approaches to conduct the analysis: combinatorial and probabilistic. The first approach focuses on the classic amortized~\cite{bentleyM,borodinE,RZ,sleatorT85,sleatorTpage} and competitive analysis~\cite{Knuth:1998,Bentley:1985,Chung:1985,aroraKK,Borodin:1991,Irani:1992}. The second approach includes average case analysis~\cite{albersM98,motwani94,AlonS} and stochastic 
analysis~\cite{nelson1977single,fill:1996, FILL1996185,Flajolet:1992,Dobrow1995,barrera2004}.   
When cache sizes are small, the miss probabilities can be explicitly computed~\cite{aven1976some, aven1987stochastic,babaoglu1983two,gelenbe1973unified}.
For large cache sizes, a number of works (e.g.,~\cite{ berger2014exact, fofack2012analysis,garetto2016unified,jung2003modeling,roberts2013exploring})
rely on the Che approximation~\cite{che2002}, which has been extended to cache networks~\cite{laoutaris2006lcd, fofack2012analysis, rosensweig2010approximate, gallo2014performance, garetto2016unified,berger2014exact}.
For fluid limits as scaling factors go to infinity (large cache sizes), mean field 
approximations of the miss probabilities have been developed~\cite{hirade2010analysis, tsukada2012fluid,gast2015transient}.
For emerging data processing systems, e.g., Memcached~\cite{memcached}, since the cache sizes are usually large and the miss probabilities are controlled to be small,  it is natural to conduct the asymptotic analysis of the miss probabilities~\cite{Jelenkovic99asymptoticapproximation,Jelenkovic:2004}. 
Although the miss ratios are small, they still significantly impact the caching system performance.  
Nevertheless, most existing 
works do not address multiple competing request flows on a shared cache space, which can impact each other through complicated ways. 

Workload measurements for caching systems~\cite{arlitt,manley97,Cunha:1995,arlitt,lee99,bradley99,Arlitt:1999:WCW,JiangSong} 
are the basis for theoretical modeling and system optimization.  
Empirical trace studies show that many characteristics of Web caches can be modeled using power-law distributions~\cite{arlitt,Yang:2016}, including, e.g.,  the overall
data item popularity rank, the document sizes, the
distribution of user requests for documents~\cite{lee99,manley97,barford99,Arlitt:1999:WCW}, and the write traffic~\cite{Yang:2016}. Similar phenomena have also been found for large-scale key-value stores~\cite{JiangSong}. 
These facts motivate us to exploit the heavy-tailed workload characteristics.   

Web and network caching is closely related to this study with a large 
body of dedicated works; see the surveys~\cite{Wang:1999:SWC,Podlipnig:2003:SWC} and the references therein.    
Recently a utility optimization approach~\cite{chu2016allocating,dehghan2017sharing} based on the Che approximation~\cite{che2002,BergerGSC14} 
has been used to study
cache sharing and partitioning.
It has concluded that under certain settings the optimal resource separation 
is better than pooling. 
However, it is not clear whether the difference is significant or not, especially when the cache size is large for a typical scenario. We show that 
a simple LRU pooling is asymptotically equivalent to the optimal separation scheme for certain settings, which is significant since the former
is adaptive and does not require any configuration or tuning optimization.
We focus on the asymptotic miss probabilities for multiple competing flows directly, as the miss ratio is one of the most important metrics 
for caching systems with large cache sizes in practice.

\section{Model and intuitive results}\label{s:model}

Consider $M$ flows of i.i.d. random data item requests that are mutually independent. 
Assume that the arrivals of flow~$k$ follow a Poisson process with rate $\lambda_k>0, 1\leq k \leq M$.   The arrivals of the mixed $M$ request flows occur at time points  $\{\tau_n, -\infty<n<+\infty \}$.  
Let $I_n$ be the index of the flow for the request at $\tau_n$. The event $\{I_n=k\}$ represents that the request at $\tau_n$ originates from flow~$k$.  Due to the Poisson assumption, we have $\Pr[I_n=k] = \lambda_k/\left(\sum_{i}\lambda_i\right)$.

To model the typical scenario that the number of distinct data items far exceeds the cache capacity,  we assume that each flow can access an infinite number of data items. 
Formally, flow~$k$ accesses the set of data items $d^{(k)}_i, i=1,2,\cdots, \infty, 1\leq k \leq M$,  from which only  
a finite number can be stored in cache due to the limited capacity. Let $s^{(k)}_i$ denote the size of data item $d^{(k)}_i$.   Note that it is possible, and even common in practice, to observe $d^{(k)}_i \equiv d^{(g)}_j$ for flows $k$ and $g$, where ``$\equiv$'' means that
the two involved data items are the same. 
Therefore, this model describes the situation when data items can overlap between different flows.  
 \begin{figure}[h!]\vspace{-0.4cm}
\centering
\includegraphics[width=8.1cm]{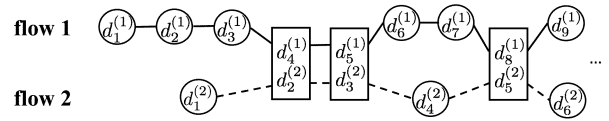}\vspace{-0.4cm}
\caption{Data items overlap between two flows}\label{fig:dataOverlap}
\vspace{-0.1cm}
\end{figure}
For example,  in Fig.~\ref{fig:dataOverlap}, we have $d^{(1)}_4\equiv d^{(2)}_2$, 
$d^{(1)}_5 \equiv d^{(2)}_3$ and $d^{(1)}_8 \equiv d^{(2)}_5$. 
Let $R_n$ denote the requested data item at time $\tau_n$.  
Thus, the event $\{I_n=k, R_n=d^{(k)}_i\}$ means that the request at time $\tau_n$ is from flow $k$ to fetch data item $d^{(k)}_i$.
We also abuse the notation for $R_n$ a bit and define $\Pr[R_0>x \mid I_0=k]$ to be the probability that the request at time $\tau_0$ is to fetch a data item with 
an index larger than $x$  in the ordered list $\left(d^{(k)}_i, i=1, 2, 3, \cdots\right)$ of flow $k$.  The ordering will be specified in the following part.

When the system reaches stationarity (Theorem~1 of~\cite{Jelenkovic99asymptoticapproximation}),  the miss ratio of the system is equal to the probability that a request $R_0$ at time $\tau_0=0$ 
finds that its asked data item is not kept in the cache.   
Therefore, we only need to consider $R_0$ in the following part. 
Due to multiple request flows, we have two sets of probabilities for each flow.  Flow~$k$ experiences the unconditional probabilities 
\begin{align}\label{eq:unconditional}
\Pr\left[R_0 = d^{(k)}_i\right]=p^{(k)}_i,  i=1, 2, 3, \cdots
\end{align}
and
the conditional probabilities 
\begin{align}\label{eq:conditional}
\Pr\left[R_0=d^{(k)}_i {\big |} I_0=k\right] = q^{(k)}_i, i=1, 2, 3, \cdots
\end{align}
In general,  $q^{(k)}_i$ can be very different from $p^{(k)}_i$, since the multiple
 request flows not only access distinct data items, but also share common data items, as shown in Fig.~\ref{fig:dataOverlap}.  
Let $\nu_k\eqdef \Pr[I_0=k]$.  We obtain, by~(\ref{eq:unconditional}), 
\begin{align}\label{eq:complex}
p^{(k)}_i = \sum_{j=1}^{M}\nu_j  \Pr\left[R_0=d^{(k)}_i {\big |} I_0=j\right]. 
\end{align}
Specially, if there is only a single flow $k$, i.e., $\Pr[I_0=k]=1$, 
 then $q^{(k)}_i=p^{(k)}_i$ for all $i$. 
 It couples the request flows, since a data item requested by flow $k$
 is more likely to be found in the cache when it has recently been requested by other flows.
 In this case, the usual belief is to pool these flows together, so that one flow can help the others to increase the hit.   
 However, if the fraction of overlapped data items is not significant enough, it is intuitively inevitable that the help obtained 
 from other flows on these common data items will be quite limited.  
 There have been no analytical studies to quantify the effects on how the overlapped data items can help different flows.

When studying flow $k$, assume that the data items $d^{(k)}_i$ are sorted such that the sequence $p^{(k)}_i$ is non-increasing with respect to $i$. 
Given~(\ref{eq:complex}),  the sequence $q^{(k)}_i$ is not necessarily non-increasing by this ordering.  
 We investigate how the following functional relationship $\Phi_k(\cdot)$ for flow~$k$, $1\leq k \leq M$, in a neighborhood of infinity, 
 impacts the miss ratio, 
 \begin{align}\label{eq:relation}
  \left(\sum_{i=y}^{\infty} q^{(k)}_i\right)^{-1}  \sim \Phi_k \left( \left(p^{(k)}_y\right)^{-1} \right), \; y \to \infty.
 \end{align}
Note $f(x)\sim g(x)$ means $\lim_{x\to \infty}f(x)/g(x)=1$.   The values in (\ref{eq:relation}) are defined using reciprocals, as both
$\left(\sum_{i=y}^{\infty} q^{(k)}_i\right)^{-1}$ and $\left(p^{(k)}_y\right)^{-1}$ take values in $[1, \infty)$, in line with the condition that $\Phi_k(\cdot)$ is defined in a neighborhood of infinity. 
We consider the following class of heavy-tailed distributions
 \begin{align}\label{eq:1class}
 \lim_{n\to \infty} q^{(k)}_n/q^{(k)}_{n+1}=1.
 \end{align}  
It includes Zipf's distribution $q^{(k)}_n\sim c/n^{\alpha}$, $c, \alpha>0$, and heavy-tailed Weibull distributions $q^{(k)}_n\sim d\exp\left(-c n^{\alpha}\right)$ with $c,d>0, 0<\alpha<1$.

It has been shown~\cite{fill:1996,Flajolet:1992,Jelenkovic99asymptoticapproximation,Jelenkovic:2004} 
that the miss probability of LRU is equivalent to the tail of the searching cost distribution under move-to-front (MTF).  
For MTF,  the data items are sorted in increasing order of their last access times.
Each time a request is made for a data item, this data item is moved to the first position of the 
list and all the other data items that were before this one increase their positions in the list by one.  
\begin{definition}\label{def:1}
Define $C_n$ to be the summation of the sizes for all the data items in the sorted list under MTF that
are  in front of the position of the data item requested by $R_n$ at time $\tau_n$.
\end{definition} 
If the cache capacity is $x$,  then a cache miss under MTF, which is equivalent for LRU policy,  can be denoted by $\{C_n>x\}$. 
For a special case when the data item sizes satisfy $s^{(k)}_i\equiv 1$ for all $k, i$, the event $\{C_n>x\}$ means the position of the data item in the list is larger than $x$ under MTF.

 For the $M$ flows mixed together,  let $\{d_i, i=1, 2, \cdots \}$ denote the set of data items requested by the entirety of these flows, with $\Pr[R_0=d_i]=p^{\circ}_i$.  
Let $s_i$ denote the size of data item $d_i$ and assume $\bar{s}\triangleq \sup_i s_i < \infty$. 
In general, $s_i$ can take different values when data item sizes vary. 
Let $$m(z)= \sum_{i=0}^{\infty} s_i\left( 1- \left(1-p^{\circ}_i\right)^z \right)$$ be an increasing function with an inverse $m^{\leftarrow}(z)$, which
is related to the Che approximation~\cite{che2002}.
 We can analytically derive $m^{\leftarrow}(z)$ in some typical cases, as shown in Corollaries~\ref{corollary:weibull} and \ref{corollary:3}, which directly exploit
 the properties of the popularity distributions, different from the Che approximation.  

 One of our main results can be informally stated as follows, for a gamma function $\Gamma(\beta_k+1)=\int_{0}^{\infty}y^{\beta_k}e^{-y}dy$.  
\vspace{0.1cm}

 {\bf Main Result (Intuitive Description)}
\emph {For $M$ flows sharing a cache, if $\Phi_k(x)$, $1 \leq k \leq M$, is approximately a polynomial function ($\approx x^{\beta_k}$), then, under mild conditions, 
we obtain, when the cache capacity $x$ is large enough,
\begin{align}\label{eq:missPinformal}
P[\mbox{miss ratio of flow } k] \approx \frac{\Gamma(\beta_k +1)}{\Phi_k(m^{\leftarrow}(x))}. 
\end{align}}
 
\noindent {\bf Sketch of the proof}:
First, we derive a representation for the miss probability of the request~$R_0$. 
Similar arguments have been used in~\cite{Fricker:2012,Jelenkovic99asymptoticapproximation} but we take a different approach.
 Among all the requests that occur before $\tau_0=0$ we find the last one that also requests data item~$R_0$.  
More formally, define $-\sigma$ to be the largest index of the request arrival before $\tau_0$ 
such that $R_{-\sigma} = R_0$.   
Conditional on $\{R_0=d^{(k)}_i\} \mathlarger{ \cap} \{I_0=k\}$, the following requests $R_{-1}, R_{-2}, R_{-3}, \cdots$ are i.i.d,  satisfying 
 \begin{align}
  \Pr \left[ R_{-j}=d^{(k)}_i \mathlarger{\mid} \{R_0=d^{(k)}_i\} \mathlarger{\cap} \{I_0=k\} \right]=p_i^{(k)}, j\geq 1,\nonumber
 \end{align}
 which implies
 \begin{align}
  \Pr \left[\sigma > n \mathlarger{\mid} \{R_0=d^{(k)}_i\} \cap \{I_0=k\} \right]=\left(1-p_i^{(k)} \right)^n. \nonumber
 \end{align}
Thus, unconditional on $R_0$, we obtain, recalling (\ref{eq:unconditional}) and (\ref{eq:conditional}),
 \begin{equation}\label{eq:rep_in}
  \Pr\left[\sigma>n \mathlarger{\mid} I_0=k\right]=\sum_{i=1}^{\infty} q_i^{(k)}\left(1-p_i^{(k)} \right)^n.
\end{equation}

Now we argue that the event $\{C_0>x\}$ is completely determined by the requests at the time points $\{ \tau_{-1}, \tau_{-2}, \cdots, \tau_{-\sigma} \}$. 
Let $M(n)$ denote the total size of all the distinct data items that have been requested on points  $\{ \tau_{-1}, \tau_{-2}, \cdots, \tau_{-n} \}$.  
Define the inverse function of $M(n)$ to be $M^{\leftarrow}(x)=\min\{n: M(n)\geq x \}$. 
We claim that 
\begin{align}\label{eq:representation_in}
  \{C_0>x\} = \{ \sigma >  M^{\leftarrow}(x)\}. 
\end{align}
If the event $\{ \sigma >  M^{\leftarrow}(x)\}$ happens,  the total size of the distinct data items requested on the time interval $(\tau_{-\sigma}, 0)$ is 
no smaller than $x$ and these data items are different from the one that is requested at 
time $\tau_0$ (or $\tau_{-\sigma}$). Due to the equivalence of LRU and MTF,  when $R_0$
arrives at $\tau_0$,  all of the data items requested on $(\tau_{-\sigma}, 0)$ will be listed in front of it under MTF.   Combining these two facts we obtain $\{ \sigma >  M^{\leftarrow}(x)\} \subseteq \{C_0>x\} $.
If $\{C_0>x\}$ occurs, then after $\tau_{-\sigma}$ when $R_0$ is listed in the first position of the list, there must be enough distinct data items that have been requested on $(\tau_{-\sigma}, 0)$ so that their total 
size exceeds or reaches $x$.  This yields $\{C_0>x\} \subseteq \{ \sigma >  M^{\leftarrow}(x)\}$, which proves (\ref{eq:representation_in}) and implies 
\begin{align}\label{eq:representation2_in}
  \Pr[C_0>x | I_0=k] = \Pr[\sigma >  M^{\leftarrow}(x) | I_0=k]. 
\end{align}
In order to compute $\Pr[\sigma >  M^{\leftarrow}(x) | I_0=k]$, we take two steps. The first step is to show
\begin{align}
\Pr[\sigma >  n | I_0=k] \approx \Gamma(\beta_k+1)/\Phi_k(n). \nonumber
\end{align}
 The second step 
is to relate $M^{\leftarrow}(x)$ to $m^{\leftarrow}(x)$ as $x \to \infty$.  

Here, we provide an intuitive proof for $\beta_k>0$.
From (\ref{eq:rep_in}), we have
 \begin{align} 
  \Pr \left[\sigma>n \mathlarger{\mid} I_0=k\right] & =  \sum_{i=1}^{\infty} q_i^{(k)} {(1-p_i^{(k)})}^n
   \approx \sum_{i=1}^{\infty} q_i^{(k)}e^{ -n p_i^{(k)}}, \nonumber
  \end{align}
which, in conjuction with (\ref{eq:relation}), yields,  by replacing $p_i^{(k)}$ by $1{\bigg /}\Phi_{k}^{\leftarrow}\left( \left(\sum_{j=i}^{\infty} q_j^{(k)}\right)^{-1} \right)$, 
\begin{align}\label{eq:firstStep_in}
 \sum_{i=1}^\infty q_i^{(k)}e^{ -n {\big /} \Phi_k \left(\left(\sum_{j=i}^{\infty} q_j^{(k)}\right)^{-1}\right)}    \approx \Gamma(\beta_k+1)/\Phi_k(n).
\end{align}

For the second step, we have $M(n) \approx m(n)$ with a high probability as $n \to \infty$ by a concentration inequality. 
The monotonicity and continuity of $m(n)$ imply $M^\leftarrow(x) \approx m^\leftarrow(x)$ with a high probability under certain conditions.
Applying (\ref{eq:representation2_in}) and (\ref{eq:firstStep_in}), we finish the proof
\begin{align}
   \Pr & [C_0>x | I_0=k]  = \Pr[\sigma >  M^{\leftarrow}(x) | I_0=k] \nonumber\\
   &  \approx \Pr[\sigma >  m^{\leftarrow}(x) | I_0=k]  \approx \Gamma(\beta_k+1)/\Phi_k\left(m^{\leftarrow}(x)\right). \nonumber
\end{align}

The rigorous proof is presented in Theorem~\ref{theorem:missP}.  It also provides a numerical method to approximate the miss probabilities. 
In practice, once we have the information about the data sizes $s_i$ and the corresponding data popularities $p_i^{\circ}$, e.g., from the trace, 
we can always explicitly express $m(z)$, since $i$ only takes a finite number of values in this case.  Then, we can
evaluate $m^{\leftarrow}(z)$ numerically; see Section~\ref{s:experiment}.   
Explicit expressions for $m^{\leftarrow}(z)$ are derived for some cases in Section~\ref{ss:mainresults}. 
Note that $m^{\leftarrow}(z)$ is tightly related to the Che approximation~\cite{che2002}; see Section~\ref{ss:che}.

\section{Multiple competing flows}\label{s:theory}
In this section, we rigorously characterize the miss probability
of a given request flow, say flow $k$,  when it is mixed with other competing flows that share the same cache in Section~\ref{ss:mainresults}. 
In Section~\ref{ss:decomposition}, we provide a method to calculate $m(x)$ for multiple flows based on a decomposition property.

\subsection{Asymptotic miss ratios} \label{ss:mainresults}
The miss probability of flow $k$, for a cache size $x$, is represented by a conditional probability $ \Pr[ C_0 > x   | I_0 = k]$.
  Recall $\bar{s}=\sup_{i} s_i<\infty$ and that $p^{\circ}_i=\Pr[R_0=d_i]$ is defined for the mixed flow.
  Note $m(z)\to \infty$ as $z \to \infty$.  By the theory of regularly varying functions~\cite{regularVariation}, a function $l(x): \RR^+ \to \RR^+ $ is slowly varying if for any $\lambda>0, l(\lambda x)/l(x) \to 1$
as $x \to \infty$;  and $\Phi(x)=x^{\alpha}l(x)$ is called regularly varying  of index $\alpha$. 

 Assume that, for a function $0<\delta(x)\leq 1$ and $\epsilon>0$,
  \begin{align}\label{eq:mz}
   \lim_{x\to \infty} \frac{m^{\leftarrow}\left( \left(1+ \epsilon \delta(x)\right) x \right)}{m^{\leftarrow}(x)}=f(\epsilon) \; \; \text{with}\;\; \lim_{\epsilon \to 0}f(\epsilon)=1. 
  \end{align}
 The function $\delta(x)$ characterizes how fast $m^{\leftarrow}(z)$ grows, and thus $\delta(x)$ should be selected to be as large as possible while still satisfying~(\ref{eq:mz}). 
  For example, when $m^{\leftarrow}(x)$ is regularly varying, e.g., $m^{\leftarrow}(x)=x^{\beta}$, we can let $\delta(x)=1$, which yields $f(\epsilon)=(1+\epsilon)^{\beta}$.
  When $m^{\leftarrow}(x)=e^{x^{\xi}}, 0<\xi<1$,  we can pick $\delta(x)=x^{-\xi}$, since $\lim_{x\to \infty}e^{ \left(x + \epsilon x^{1-\xi}\right)^{\xi} }/e^{x^{\xi}} = e^{\epsilon \xi}$, implying
  $f(\epsilon)=e^{\epsilon \xi}$. Both satisfy $\lim_{\epsilon \to 0}f(\epsilon)=1$.
  Note that in these examples $\delta(x)$ satisfies the following condition: there exist $h_2> h_1>0, h_4>h_3>0$ and $x_0$, for $x>x_0$,  
  \begin{align}\label{eq:delta}
     h_1<\frac{\delta(x)}{\delta(x+\epsilon \delta(x))}<h_2,    h_3<\frac{\delta(x-\epsilon \delta(x))}{\delta(x)}<h_4.
  \end{align}

  \begin{theorem}\label{theorem:missP}
Consider $M$ flows sharing a cache. Under~(\ref{eq:1class}), (\ref{eq:mz}) and~(\ref{eq:delta}),  for  $\Phi_k(x)\sim x^{\beta_k} l_k(x)$, $1 \leq k \leq M$, and
$\varlimsup_{x\to \infty} \log \left(m^{\leftarrow}(x)\right)/(\delta^2(x) x) =0$, 
we have, for $\beta_k \geq 0$ (when $\beta_k=0$, $l_k(x)$ is eventually non-decreasing), as $x\to \infty$,
\begin{align}\label{eq:missP}
\Pr[C_0 >x | I_0=k] \sim \frac{\Gamma(\beta_k +1)}{\Phi_k(m^{\leftarrow}(x))}.
\end{align}
\end{theorem}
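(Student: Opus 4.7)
The plan is to carry out the two-step scheme previewed in the Main Result, starting from the identity $\Pr[C_0>x\mid I_0=k]=\Pr[\sigma>M^{\leftarrow}(x)\mid I_0=k]$ already established in (\ref{eq:representation2_in}). Step (i) will be a purely deterministic asymptotic for $\Pr[\sigma>n\mid I_0=k]$; step (ii) will be a high-probability concentration of $M^{\leftarrow}(x)$ around $m^{\leftarrow}(x)$; the two are then glued together by monotonicity and a sandwich argument driven by (\ref{eq:mz}).

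For step (i), starting from $\Pr[\sigma>n\mid I_0=k]=\sum_i q_i^{(k)}(1-p_i^{(k)})^n$ in (\ref{eq:rep_in}), I first replace $(1-p_i^{(k)})^n$ by $e^{-np_i^{(k)}}$; the heavy-tail condition (\ref{eq:1class}) forces $p_i^{(k)}\to 0$, and together with Potter-type bounds on the regularly varying $\Phi_k$ this replacement is uniformly accurate on the range of indices where the sum concentrates. Using (\ref{eq:relation}) to substitute $p_i^{(k)}\sim 1/\Phi_k^{\leftarrow}((\sum_{j\ge i}q_j^{(k)})^{-1})$ and then changing variable to $u=\sum_{j\ge i}q_j^{(k)}$ rewrites the sum as a Riemann--Stieltjes integral; a further substitution $v=n/\Phi_k(1/u)$, whose Jacobian encodes the regular-variation index $\beta_k$, produces
\begin{equation*}
\Pr[\sigma>n\mid I_0=k]\;\sim\;\frac{1}{\Phi_k(n)}\int_0^{\infty}e^{-v}v^{\beta_k}\,dv\;=\;\frac{\Gamma(\beta_k+1)}{\Phi_k(n)},
\end{equation*}
with dominated convergence absorbing the slowly varying factor $l_k$ and the boundary contributions; the marginal case $\beta_k=0$ with eventually monotone $l_k$ is handled separately by monotone convergence.

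For step (ii), write $M(n)=\sum_i s_i\,\ind\{d_i\text{ is requested within the last }n\text{ arrivals}\}$. Its expectation is exactly $m(n)$, and $M(n)$ is a function of $n$ i.i.d.\ requests whose per-coordinate variation is at most $\bar s$. McDiarmid's inequality then gives
\begin{equation*}
\Pr\bigl[|M(n)-m(n)|>\epsilon\delta(x)x\bigr]\;\le\;2\exp\bigl(-2\epsilon^2\delta^2(x)x^2/(n\bar{s}^2)\bigr),
\end{equation*}
and evaluating at $n=m^{\leftarrow}((1\pm\epsilon\delta(x))x)$, the hypothesis $\varlimsup_{x\to\infty}\log m^{\leftarrow}(x)/(\delta^2(x)x)=0$ combined with the stability condition (\ref{eq:delta}) renders this bound $o(1/\Phi_k(m^{\leftarrow}(x)))$. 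Hence with overwhelming probability $M^{\leftarrow}(x)\in[m^{\leftarrow}((1-\epsilon\delta(x))x),\,m^{\leftarrow}((1+\epsilon\delta(x))x)]$. Monotonicity of the $\sigma$-tail, step (i), and the regular-variation identity $\Phi_k(m^{\leftarrow}((1\pm\epsilon\delta(x))x))/\Phi_k(m^{\leftarrow}(x))\to f(\pm\epsilon)^{\beta_k}$ (a consequence of (\ref{eq:mz}) and slow variation of $l_k$) then sandwich $\Pr[C_0>x\mid I_0=k]\,\Phi_k(m^{\leftarrow}(x))$ between $\Gamma(\beta_k+1)f(\mp\epsilon)^{-\beta_k}\pm o(1)$; sending $\epsilon\to 0$ yields (\ref{eq:missP}).

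I expect step (i) to be the principal obstacle. Establishing the Tauberian-type asymptotic $\sum_i q_i^{(k)}e^{-np_i^{(k)}}\sim\Gamma(\beta_k+1)/\Phi_k(n)$ uniformly across the full class (\ref{eq:1class}) requires handling qualitatively different regimes---polynomial Zipf-type tails ($\beta_k>0$, where Potter bounds do the work) and sub-exponential Weibull-type tails ($\beta_k=0$, where everything hinges on the slowly varying $l_k$ and on careful monotonicity arguments at the boundary). Earlier results are essentially restricted to pure Zipf, so extending to (\ref{eq:1class}) while simultaneously allowing heterogeneous data item sizes (whose footprint enters only through $m^{\leftarrow}$ in step (ii)) is where the delicate estimates accumulate.
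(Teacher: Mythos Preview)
Your Step~(i) follows essentially the same route as the paper: an Abel/Tauberian change of variable from the discrete sum $\sum_i q_i^{(k)}e^{-np_i^{(k)}}$ to a Riemann--Stieltjes integral in $u=Q_i=\sum_{j\ge i}q_j^{(k)}$, followed by the substitution that introduces the Gamma integral (note a slip in your write-up: the substitution should be $v=n/\Phi_k^{\leftarrow}(1/u)$, not $v=n/\Phi_k(1/u)$). The paper carries this out via a dyadic slicing of the index range and Potter bounds rather than a single dominated-convergence step, but the content is the same, and the case $\beta_k=0$ is indeed handled separately by monotonicity.

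The genuine gap is in Step~(ii). McDiarmid applied to the $n$-request representation gives
\[
\Pr\bigl[|M(n)-m(n)|>\epsilon\delta(x)x\bigr]\le 2\exp\Bigl(-c\,\epsilon^2\delta^2(x)\,\frac{x^2}{n\bar s^2}\Bigr),
\]
with $n\asymp m^{\leftarrow}(x)$. The exponent therefore scales like $\delta^2(x)\,x^2/m^{\leftarrow}(x)$, whereas the hypothesis $\log m^{\leftarrow}(x)/(\delta^2(x)x)\to 0$ only guarantees $\delta^2(x)x\gg\log m^{\leftarrow}(x)$. Whenever $m^{\leftarrow}(x)/x\to\infty$ --- which is the generic regime (for Zipf with index $\alpha>1$ one has $m^{\leftarrow}(x)\asymp x^{\alpha}$) --- your exponent is weaker by a factor $x/m^{\leftarrow}(x)\to 0$; for $\alpha\ge 2$ the McDiarmid bound does not even go to zero, let alone beat $1/\Phi_k(m^{\leftarrow}(x))$.

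The fix is to abandon the request-level bounded-differences view and work instead with the item-level Bernoulli decomposition $M(n)=\sum_i s_i X_i$, $X_i=\ind\{d_i\text{ requested in the last }n\text{ arrivals}\}$. A direct Chernoff bound on this sum (the $X_i$ are negatively associated, so the product-of-marginal-MGFs inequality still holds) yields an exponent proportional to the \emph{mean} $m(n)=x$ rather than to $n=m^{\leftarrow}(x)$:
\[
\Pr\bigl[M(m^{\leftarrow}(x))\ge (1+\epsilon\delta(x))x\bigr]\le \exp\bigl(-\epsilon^2\delta^2(x)x/(4\bar s)\bigr),
\]
and symmetrically for the lower tail. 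This is exactly what the hypothesis $\log m^{\leftarrow}(x)/(\delta^2(x)x)\to 0$ is designed to kill, and your sandwich argument via (\ref{eq:mz}) then goes through unchanged.
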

Theorem~\ref{theorem:missP} is the rigorous version of the main result described in (\ref{eq:missPinformal}). The proof is presented in Section~\ref{ss:p:theorem:missP}.
Based on Theorem~\ref{theorem:missP}, we can easily derive some corollaries.  
We begin with the {special case} when there is only a single flow~$k$ in service and all data items are of the same size $s_i\equiv 1$. 
For a single flow~$k$, we simplify the notation by $ \Pr\left[R_0>x {\big |} I_0=k \right] $ $=  \Pr\left[R_0>x \right]$ and  $ \Pr\left[C_0>x {\big |} I_0=k \right] =  \Pr\left[C_0>x \right]$.
 Theorem~\ref{theorem:missP} recovers the results 
in~\cite{Jelenkovic99asymptoticapproximation,Jelenkovic:2004} for Zipf's distribution
\begin{align}\label{eq:zipf}
p^{(k)}_i =  q^{(k)}_i \sim c/i^{\alpha}, \alpha>1.
\end{align}
Our result enhances (\ref{eq:zipf}) in three aspects. First, we study multiple flows ($p^{(k)}_i \neq q^{(k)}_i$) that can have overlapped data items and the requested data items can have different sizes. 
 Second,  we address the case $\alpha= 1$ (then $c$ needs to be replaced by $l(i)$ as in~(\ref{eq:zipf5})), 
 while the 
results in~\cite{Jelenkovic99asymptoticapproximation,Jelenkovic:2004} assume $\alpha>1$.  For $\alpha<1$, we need to assume that only a finite number of data items can be requested (this paper assumes an infinite number); otherwise the popularity distribution does not exist. This special case needs to 
be handled differently and is not presented in this paper. Due to this difference, the asymptotical result in (\ref{eq:missP}) is only accurate for 
large $x$ when $\alpha \approx 1$. 
Third, our result
can derive the asymptotic miss probability for a large class of popularity distributions, e.g., Weibull, with varying data item sizes.  
Corollary~\ref{corollary:zipf2} extends the results of Theorem~3 in~\cite{Jelenkovic99asymptoticapproximation} that is proved under the condition~(\ref{eq:zipf}) to regularly varying probabilities
\begin{align}\label{eq:zipf5}
p^{(k)}_i \sim l(i)/i^{\alpha}, \alpha \geq 1,
\end{align} 
with $l(\cdot)$ being a slowly varying function, e.g., $l(x)=\log x$.

\begin{corollary} \label{corollary:zipf2}
Consider a single flow with $s_i\equiv 1$ and $p_i^{(1)} \sim l(i)/i^\alpha$, $\alpha>1$.
Let $l_1(x) = {l(x)}^{-1/\alpha}$ and $l_{n+1} = l_1(x/l_{n}(x))$, $n\geq 1$.
If $l_{n_0}(x) \sim l_{n_0+1}(x)$ as $x \to \infty$ for some $n_0$, then
\begin{align}\label{eq:zipf4}
 \lim_{x\to \infty}\frac{\Pr[C_0 >x]}{\Pr[R_0>x]} = (1-1/\alpha) \left(\Gamma(1-1/\alpha)\right)^{\alpha}.
\end{align}
\end{corollary}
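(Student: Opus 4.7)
The plan is to specialize Theorem~\ref{theorem:missP} to the single-flow, unit-size case, compute each of its three ingredients $\Phi_1$, $m$, and $m^{\leftarrow}$ up to asymptotic equivalence, and then divide the resulting expression for $\Pr[C_0>x]$ by the Karamata tail $\Pr[R_0>x]\sim l(x)/((\alpha-1)x^{\alpha-1})$. The constant $(1-1/\alpha)\Gamma(1-1/\alpha)^{\alpha}$ in the claimed limit should assemble from $\Gamma(\beta_1+1)=(1-1/\alpha)\Gamma(1-1/\alpha)$ together with an additional factor $\Gamma(1-1/\alpha)^{\alpha-1}$ produced by inverting $m$.

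First I would read off $\Phi_1$ from the defining relation~(\ref{eq:relation}). Since $q_i=p_i\sim l(i)/i^{\alpha}$, Karamata's theorem gives $(\sum_{i\ge y}q_i)^{-1}\sim(\alpha-1)y^{\alpha-1}/l(y)$ and $(p_y)^{-1}\sim y^{\alpha}/l(y)$. Setting $z=(p_y)^{-1}$ and inverting via $y\sim z^{1/\alpha}l(z^{1/\alpha})^{1/\alpha}$ yields $\Phi_1(z)\sim(\alpha-1)z^{1-1/\alpha}l(z^{1/\alpha})^{-1/\alpha}$, so $\beta_1=1-1/\alpha$ and $\Gamma(\beta_1+1)=(1-1/\alpha)\Gamma(1-1/\alpha)$.

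Next I would evaluate $m$ and $m^{\leftarrow}$. With $s_i\equiv 1$ and $p_i^{\circ}=p_i$, I replace $(1-p_i)^z$ by $e^{-zp_i}$ and approximate $m(z)$ by $\int_0^{\infty}(1-\exp(-zl(t)/t^{\alpha}))dt$; the substitution $u=zl(t)/t^{\alpha}$ combined with the standard identity $\int_0^{\infty}(1-e^{-u})u^{-1/\alpha-1}du=\alpha\Gamma(1-1/\alpha)$ gives $m(z)\sim\Gamma(1-1/\alpha)\,y^{*}(z)$, where $y^{*\alpha}=z\,l(y^{*})$. Inverting this relation is exactly the purpose of the iteration $l_{n+1}=l_1(\cdot/l_n)$: it is a fixed-point iteration for the slowly varying part of the asymptotic inverse of $y\mapsto y^{\alpha}/l(y)$, and the stabilization $l_{n_0}\sim l_{n_0+1}$ is what makes this slowly varying part well defined up to asymptotic equivalence. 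After absorbing constants via slow variation one obtains $m^{\leftarrow}(x)\sim x^{\alpha}/(\Gamma(1-1/\alpha)^{\alpha}l(x))$, and substituting back gives $\Phi_1(m^{\leftarrow}(x))\sim(\alpha-1)x^{\alpha-1}/(\Gamma(1-1/\alpha)^{\alpha-1}l(x))$. Theorem~\ref{theorem:missP} then produces
\[\Pr[C_0>x]\sim\frac{\Gamma(1-1/\alpha)^{\alpha}\,l(x)}{\alpha\,x^{\alpha-1}},\]
and dividing by $\Pr[R_0>x]$ gives the claimed limit.

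Before concluding I have to verify the technical hypotheses of Theorem~\ref{theorem:missP}: $m^{\leftarrow}$ is regularly varying of positive index $\alpha$, so $\delta(x)\equiv 1$ works, and both~(\ref{eq:delta}) and $\varlimsup\log m^{\leftarrow}(x)/(\delta^{2}(x)x)=0$ reduce to $(\log x)/x\to 0$. The main obstacle is bookkeeping the slowly varying factor $l(\cdot)$ across two nested inversions---first $y\leftrightarrow z$ inside the definition of $\Phi_1$, then $z\leftrightarrow x$ inside $m^{\leftarrow}$---because slow variation can be used to cancel $l(x/\mathrm{const})\sim l(x)$ only at the end, not to discard subleading factors in the middle. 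The hypothesis $l_{n_0}\sim l_{n_0+1}$ is precisely what rules out pathological $l$ for which finitely many self-compositions of $l_1$ fail to capture the asymptotic inverse of $y\mapsto y^{\alpha}/l(y)$, so it is what makes the derivation of $m^{\leftarrow}$ rigorous.
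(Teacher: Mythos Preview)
Your overall strategy matches the paper's: specialize Theorem~\ref{theorem:missP}, compute $\Phi_1$, $m$ and $m^{\leftarrow}$, and divide by the Karamata tail. But there is a genuine gap in how you handle the slowly varying factor when passing between $y$ and $z=(p_y)^{-1}$.

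When you invert $z\sim y^{\alpha}/l(y)$ to $y\sim z^{1/\alpha}l(z^{1/\alpha})^{1/\alpha}=z^{1/\alpha}/l_1(z^{1/\alpha})$, you have taken only the \emph{first} iterate of the fixed-point scheme. The correct asymptotic inverse is $y\sim z^{1/\alpha}/l_1^{\#}(z^{1/\alpha})$ with $l_1^{\#}$ the de Bruijn conjugate, and under the hypothesis $l_{n_0}\sim l_{n_0+1}$ one has $l_1^{\#}\sim 1/l_{n_0}$ (Proposition~2.3.5 of~\cite{regularVariation}); nothing says $l_{n_0}\sim l_1$. So your formula $\Phi_1(z)\sim(\alpha-1)z^{1-1/\alpha}l_1(z^{1/\alpha})$ need not satisfy the defining relation~(\ref{eq:relation}), and when you then substitute $z=m^{\leftarrow}(x)$ and tacitly use $l\bigl((m^{\leftarrow}(x))^{1/\alpha}\bigr)\sim l(x)$, that simplification is exactly the statement $l(xl_1(x))\sim l(x)$, i.e.\ $l_2\sim l_1$. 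For $n_0>1$ these shortcuts are unjustified, and a direct computation shows your composite $\Phi_1(m^{\leftarrow}(x))$ is off by the factor $l_2(x)/l_1(x)$. You correctly identify that the iteration hypothesis governs the inversion of $y\mapsto y^{\alpha}/l(y)$, but you invoke it only for $m^{\leftarrow}$ and not for $\Phi_1$, even though the same inversion appears in both places.

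The paper avoids this trap by never writing $\Phi_1(z)$ or $m^{\leftarrow}(x)$ in closed form. Instead it posits the target $F(z)=(\alpha-1)z^{\alpha-1}/\bigl(\Gamma(1-1/\alpha)^{\alpha-1}l(z)\bigr)$ and checks $F(m(w))\sim\Phi_1(w)$ by direct substitution $w=y^{\alpha}/l(y)$, so that neither inversion is ever performed in isolation; the iteration hypothesis enters only once, through $l_1^{\#}\sim 1/l_{n_0}$ in the asymptotic for $m(z)$. A second, smaller issue: your substitution $u=zl(t)/t^{\alpha}$ is not valid as stated because $l(t)/t^{\alpha}$ need not be monotone or differentiable. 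The paper replaces it by the genuinely decreasing $f(t)=\alpha\int_t^{\infty}l(s)s^{-\alpha-1}ds\sim l(t)/t^{\alpha}$, integrates by parts, and applies an Abelian theorem (Theorem~1.7.1$'$ of~\cite{regularVariation}) to the resulting Laplace integral to obtain $m(z)\sim\Gamma(1-1/\alpha)\,z^{1/\alpha}/l_{n_0}(z^{1/\alpha})$.
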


\begin{proof}
First we provide a proof for the special case $l(x)=c$, which was proved in Theorem~3 of~\cite{Jelenkovic99asymptoticapproximation}.
The proof for a general $l(x)$ is presented in Section~\ref{ss:p:corollary:zipf2}.

Note that $p_x^{(k)} \sim c/x^{\alpha}$ and
\begin{align}\label{eq:zipf2}
  \Pr\left[R_0>x\right] = \sum_{i\geq x}p_i^{(1)} \sim \int_{x}^{\infty} \frac{c}{x^{\alpha}}dx = \frac{c}{(\alpha-1)x^{\alpha-1}}.
\end{align}
Using~(\ref{eq:relation}), we obtain $\Phi_1(x)\sim (\alpha-1)c^{-1/\alpha} x^{1-1/\alpha}$. 
 In addition, we have 
 \begin{align}
 m(z) &\sim \sum_{i\geq 1} \left(1-\exp\left( -\frac{cz}{i^{\alpha}} \right)\right) 
      \sim \int_{1}^{\infty} \left(1-\exp\left( -\frac{cz}{x^{\alpha}} \right)\right) dx \nonumber\\
      &\sim \Gamma(1-1/\alpha) c^{1/\alpha} z^{1/\alpha},  \nonumber
 \end{align}
 implying the inverse
$m^{\leftarrow}(z) \sim z^{\alpha}/\left(c \Gamma(1-1/\alpha)^{\alpha} \right).$
Picking $\delta(x)=1$, it is easy to verify $\varlimsup_{x\to \infty}  \log m^{\leftarrow}(x)/\delta(x)^2 x =0$
and~(\ref{eq:delta}).  
Therefore, by Theorem~\ref{theorem:missP}, we obtain, as $x\to \infty$, 
\begin{align}\label{eq:zipf1}
\Pr[C_0>x] \sim \frac{\Gamma(2-1/\alpha) \Gamma(1-1/\alpha)^{\alpha-1}}{\alpha-1}\frac{c}{x^{\alpha-1}}.
\end{align}   
Combining~(\ref{eq:zipf1}) and (\ref{eq:zipf2}) finishes the proof. 
\end{proof}

\begin{corollary}\label{corollary:weibull}
For a single flow with requests following a heavy-tailed Weibull distribution $p^{(k)}_i \sim c\exp\left(- i^{\xi} \right)$, $0<\xi<1/3$ and $s_i\equiv 1$, 
we have, for a Euler's constant $\gamma=0.5772\cdots$, 
\begin{align}\label{eq:weibull}
 \lim_{x\to \infty}\frac{\Pr[C_0 >x]}{\Pr[R_0>x]} = e^{\gamma}.
\end{align}
\end{corollary}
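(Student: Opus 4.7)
The plan is to apply Theorem~\ref{theorem:missP} with $\beta_1 = 1$, after identifying the relevant $\Phi_1$ and $m^{\leftarrow}$ for the heavy-tailed Weibull popularity $p_i^{(1)} \sim c\exp(-i^\xi)$. First I compute the marginal tail
\begin{equation*}
\Pr[R_0>x] = \sum_{i>x} p_i^{(1)} \sim \int_x^\infty c\, e^{-y^\xi}\, dy \sim \frac{c}{\xi}\, x^{1-\xi}\, e^{-x^\xi}
\end{equation*}
by a standard Laplace/integration-by-parts argument. Combining this with $(p_x^{(1)})^{-1} \sim c^{-1} e^{x^\xi}$ and inverting the asymptotic relation~(\ref{eq:relation}), I obtain
\begin{equation*}
\Phi_1(y) \sim \xi\, y\, (\log y)^{1 - 1/\xi},
\end{equation*}
so the regularly varying index is $\beta_1 = 1$ with slowly varying factor $l_1(y) = \xi (\log y)^{1 - 1/\xi}$.

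The heart of the argument is a sharp two-term asymptotic expansion of $m(z)$. Starting from $m(z) \sim \int_0^\infty (1 - e^{-cz\, e^{-y^\xi}})\, dy$ and substituting $y^\xi = \log(cz) + v$ rewrites the integral against the Gumbel survival function $1 - e^{-e^{-v}}$ on $[-\log cz,\infty)$. An integration by parts using the antiderivative $\xi(\log cz + v)^{1/\xi}$, whose boundary contributions vanish, recasts the integral as the expectation
\begin{equation*}
m(z) \sim \expect\bigl[(\log cz + V)^{1/\xi}\bigr],
\end{equation*}
where $V$ is standard Gumbel with $\expect[V] = \gamma$. A first-order Taylor expansion of the power in $V/\log cz$ then yields
\begin{equation*}
m(z) \sim (\log cz)^{1/\xi} + \frac{\gamma}{\xi}\, (\log cz)^{1/\xi - 1},
\end{equation*}
and asymptotically inverting this two-term expansion gives
\begin{equation*}
m^{\leftarrow}(x) \sim \frac{e^{-\gamma}}{c}\, e^{x^\xi}.
\end{equation*}
This is precisely where Euler's constant enters the final answer.

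Next, I would verify the hypotheses of Theorem~\ref{theorem:missP} with the choice $\delta(x) = x^{-\xi}$ already flagged in the discussion following~(\ref{eq:mz}) for inverses of the form $e^{x^\xi}$. Conditions~(\ref{eq:mz}) and~(\ref{eq:delta}) are routine, while the key growth condition
\begin{equation*}
\varlimsup_{x\to\infty} \frac{\log m^{\leftarrow}(x)}{\delta(x)^2\, x} = \varlimsup_{x\to\infty} x^{3\xi - 1}
\end{equation*}
vanishes exactly when $\xi < 1/3$, which is the hypothesis on $\xi$ in the corollary. Applying Theorem~\ref{theorem:missP} with $\Gamma(2) = 1$ yields
\begin{equation*}
\Pr[C_0 > x] \sim \frac{1}{\Phi_1(m^{\leftarrow}(x))} \sim \frac{c\, e^\gamma}{\xi}\, x^{1-\xi}\, e^{-x^\xi},
\end{equation*}
and dividing by the expression for $\Pr[R_0>x]$ delivers the claimed ratio $e^\gamma$.

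The principal technical obstacle is the subleading expansion of $m(z)$: the truncation error at the lower limit $v = -\log cz$, the Taylor remainder inside the expectation, and the propagation of both through the inversion $m \mapsto m^{\leftarrow}$ must all be of strictly smaller order than the $\gamma(\log cz)^{1/\xi-1}/\xi$ correction. Unlike the Zipf setting of Corollary~\ref{corollary:zipf2}, where only the leading term of $m$ determines the limit constant, here the first subleading correction is exactly what produces the $e^\gamma$ factor, so the error control must be genuinely more refined, and this is the step most likely to require careful bookkeeping.
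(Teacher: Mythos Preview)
Your overall strategy is identical to the paper's: compute $\Pr[R_0>x]$, identify $\Phi_1$ with index $\beta_1=1$, determine $m^{\leftarrow}(x)\sim c^{-1}e^{-\gamma}e^{x^\xi}$, verify the hypotheses of Theorem~\ref{theorem:missP} with $\delta(x)=x^{-\xi}$ (which is exactly where $\xi<1/3$ enters), and then divide. The one substantive difference is in how you obtain $m^{\leftarrow}(x)\sim c^{-1}e^{-\gamma}e^{x^\xi}$: the paper simply invokes Lemma~6 of~\cite{Jelenkovic99asymptoticapproximation}, whereas you rederive it from scratch via the substitution $y^\xi=\log(cz)+v$, an integration by parts against the Gumbel density, and a two-term Taylor expansion of $\expect[(\log cz+V)^{1/\xi}]$. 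Your route is more self-contained and makes transparent why Euler's constant appears (it is $\expect[V]$ for the standard Gumbel law), at the cost of the error bookkeeping you correctly flag in your final paragraph; the paper's route is shorter but relies on an external reference for exactly that bookkeeping. Both arrive at the same asymptotic for $m^{\leftarrow}$, and from there the proofs coincide.
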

\begin{proof}
Since $c e^{- x^{\xi}}$ is a decreasing function in $x$, we have
\begin{align}\label{eq:weibull5}
 \int_{x}^{\infty} ce^{- y^{\xi}}dy \leq  \sum_{i=x}^{\infty}   c\exp\left(- i^{\xi} \right) & \leq \int_{x-1}^{\infty} ce^{- y^{\xi}}dy. 
\end{align}
Changing the variable $z=y^{\xi}$ and using the property of incomplete gamma function, we obtain
\begin{align}\label{eq:weibull6}
 \int_{x}^{\infty} ce^{- y^{\xi}}dy  &= \int_{x^{\xi}}^{\infty} \frac{c}{\xi}z^{1/\xi -1}e^{-z}dz \sim \frac{c}{\xi}x^{1-\xi}e^{-x^{\xi}},
\end{align}
which implies, for $0<\xi<1$,
\begin{align}\label{eq:weibull1}
\Phi_k(x) \sim \xi \left( \log (c x) \right)^{1-1/\xi} x.
\end{align} 
Using Lemma~6 in~\cite{Jelenkovic99asymptoticapproximation}, we obtain 
\begin{align}\label{eq:weibull2}
m^{\leftarrow}(z) \sim e^{-\gamma} e^{z^{\xi}}/c.
\end{align} 
Picking $\delta(x)=x^{-\xi} >0$, for $0<\xi<1/3$, it is easy to verify  $\varlimsup_{x\to \infty}$ $ \log m^{\leftarrow}(x)/x^{1-2\xi} =0$
and~(\ref{eq:delta}). 
Combining~(\ref{eq:weibull1}) and (\ref{eq:weibull2}), by Theorem~\ref{theorem:missP}, we derive
\begin{align}\label{eq:weibull3}
\Pr[C_0>x] \sim \frac{e^{\gamma}c}{\xi}x^{1-\xi}e^{-x^{\xi}},
\end{align} 
which, using~(\ref{eq:weibull5}) and (\ref{eq:weibull6}),  proves (\ref{eq:weibull}).
\end{proof}

 \subsection{Decomposition property} \label{ss:decomposition}
For multiple request flows without overlapped common data items,  we have a decomposition property. 
Let $P=\left(p^{\circ}_i, i\geq 1\right)$ be constructed from a set of distributions $Q^{(k)}=\left(q^{(k)}_i, i\geq 1\right)$ 
according to probabilities $\nu_k$, $\sum_{k}\nu_k=1$.
Specifically, a random data item following the distribution $P$ is generated by sampling from the distribution $Q^{(k)}$ with a probability $\nu_k$.  
Since two flows~$k_1,k_2$ have no overlapped data items, we have 
$\Pr\left[R_0=d^{(k_1)}_i {\big |} I_0=k_2\right]=0$. 
Therefore, according to~(\ref{eq:complex}), $\left(p^{\circ}_i, i\geq 1\right)$ can be represented by an unordered list, 
\begin{align}\label{eq:unordered}
\left( \left( \nu_k q^{(k)}_i, k+i=m\right), m=2, 3, 4, \cdots \right).
\end{align} 
Let $\bar{m}(z)= \sum_{i=0}^{\infty} s_i\left( 1- \exp(-p^{\circ}_i z) \right)$. 
Lemma~\ref{lemma:decomposition} shows a decomposition property for $m(z)$ 
and $\bar{m}(z) \sim m(z)$ under certain conditions.
Let $m^{(k)}(z) = \sum_{i=0}^{\infty} s^{(k)}_i\left( 1- \left(1-q^{(k)}_i\right)^z \right)$ 
and $\bar{m}^{(k)}(z) = \sum_{i=0}^{\infty} s^{(k)}_i\left( 1- \exp\left(-q^{(k)}_i z\right) \right)$.
It is often easier to  compute $\bar{m}^{(k)}(z)$ than $m^{(k)}(z)$.  

\begin{lemma}\label{lemma:decomposition}
Without overlapped data items,  if,  for either $g(x)=m^{(k)}(x)$ or $g(x)=\bar{m}^{(k)}(x)$, we have 
$\varlimsup_{x\to \infty}g((1+\delta) x)/g(x)=f^{(k)}(\delta)$, $0<\delta<1$ with $\lim_{\delta \to 0} f^{(k)}(\delta)=1$,  then,  as $z \to \infty$,
\begin{align}\label{eq:approx}
  m^{(k)}(z) \sim \bar{m}^{(k)}(z), 
\end{align}
and
\begin{align}\label{eq:decomp}
 \bar{m}(z)\sim  \sum_{k } \bar{m}^{(k)}(\nu_kz)  \sim \sum_{j} m^{(k)}(\nu_kz) \sim m(z).
\end{align}
\end{lemma}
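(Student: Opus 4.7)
The plan is to reduce the lemma to a single Poisson-approximation statement and then chain everything together. The no-overlap hypothesis, combined with the unordered-list reindexing in (\ref{eq:unordered}), makes the middle identity
\[
\bar{m}(z)=\sum_{k}\bar{m}^{(k)}(\nu_k z)
\]
hold \emph{exactly}, not merely asymptotically: every summand $s_i(1-e^{-p^{\circ}_i z})$ of $\bar m(z)$ matches a unique summand $s^{(k)}_j(1-e^{-\nu_k q^{(k)}_j z})$, because no overlap lets each $p^{\circ}_i$ be identified with some $\nu_k q^{(k)}_j$ while carrying its own size $s^{(k)}_j$. The very same reshuffling gives the exact identity $m(z)=\sum_{k}\sum_{j} s^{(k)}_j\bigl(1-(1-\nu_k q^{(k)}_j)^z\bigr)$. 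Hence the lemma comes down to showing that, under the stated hypotheses, replacing $(1-q)^z$ by $e^{-qz}$ is asymptotically free; i.e., (\ref{eq:approx}) and its mixed-alphabet version $m(z)\sim\bar m(z)$.

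To prove (\ref{eq:approx}), I would use the pointwise bounds $(1-q)^z\le e^{-qz}$ (from $-\log(1-q)\ge q$) and, for any fixed $\eta>0$, $(1-q)^z\ge e^{-(1+\eta)qz}$ whenever $q\le q_\eta$ for a small enough $q_\eta$ (from $-\log(1-q)\le(1+\eta)q$ on $[0,q_\eta]$). Condition (\ref{eq:1class}) together with $\sum_i q^{(k)}_i\le 1$ forces $q^{(k)}_i\to 0$, so there is an index $i_\eta$ past which $q^{(k)}_i\le q_\eta$. The first inequality gives $\bar m^{(k)}(z)\le m^{(k)}(z)$ immediately. Splitting the sum at $i_\eta$, bounding the first $i_\eta$ terms trivially by $2i_\eta\bar s$, and applying the second inequality to the tail, I obtain
\[
m^{(k)}(z)\le 2i_\eta\bar s+\bar m^{(k)}\bigl((1+\eta)z\bigr).
\]
Dividing by $\bar m^{(k)}(z)$, noting that $\bar m^{(k)}(z)\to\infty$ (infinite alphabet with $q^{(k)}_i\to 0$), and invoking the hypothesis $\bar m^{(k)}((1+\eta)z)/\bar m^{(k)}(z)\to f^{(k)}(\eta)$ yields $\limsup_z m^{(k)}(z)/\bar m^{(k)}(z)\le f^{(k)}(\eta)$; letting $\eta\downarrow 0$ and using $f^{(k)}(\eta)\to 1$ closes (\ref{eq:approx}). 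To close (\ref{eq:decomp}), I apply the same splitting-and-sandwich argument to the rewritten identity for $m(z)$ (now with the small probabilities $\nu_k q^{(k)}_j$ in place of $q^{(k)}_i$), obtaining $m(z)\sim\bar m(z)$; the term-by-term equivalence $m^{(k)}(\nu_k z)\sim\bar m^{(k)}(\nu_k z)$ obtained by rescaling $z\mapsto\nu_k z$ in (\ref{eq:approx}) then fills in the intermediate step in (\ref{eq:decomp}).

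The main technical obstacle is that the growth hypothesis is stated to hold for \emph{either} $m^{(k)}$ or $\bar m^{(k)}$. When only the hypothesis on $m^{(k)}$ is available, one cannot divide by $\bar m^{(k)}(z)$ and immediately invoke it; instead one rearranges the bound above as $\bar m^{(k)}\bigl((1+\eta)z\bigr)\ge m^{(k)}(z)-2i_\eta\bar s$, changes variables $w=(1+\eta)z$, and applies the hypothesis on $m^{(k)}(\cdot)$ itself to bound $m^{(k)}(w)/m^{(k)}(w/(1+\eta))$, yielding the same sandwich after a slightly delicate two-step argument. A secondary subtlety arises in the mixed-sum step: one must check that the growth modulus transfers from each $\bar m^{(k)}(\nu_k z)$ to the weighted sum $\sum_k\bar m^{(k)}(\nu_k z)=\bar m(z)$. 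Because a finite sum of functions each satisfying such a modulus again satisfies one (with modulus bounded above by $\max_k f^{(k)}(\eta)\to 1$), this transfer is routine but deserves to be written out explicitly.
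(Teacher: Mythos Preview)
Your proposal is correct and follows essentially the same route as the paper's proof: both use the elementary sandwich $(1-q)^z\le e^{-qz}$ and $(1-q)^z\ge e^{-(1+\eta)qz}$ for small $q$, split the sum at a finite index, bound the head by a multiple of $\bar s$, and invoke the growth hypothesis on the tail; both then rewrite $m(z)$ via the no-overlap reindexing (\ref{eq:unordered}) and repeat the argument with $\nu_k q^{(k)}_j$ in place of $q^{(k)}_j$. Your observation that $\bar m(z)=\sum_k\bar m^{(k)}(\nu_k z)$ holds \emph{exactly} is a useful sharpening the paper leaves implicit, and your explicit treatment of the ``either $m^{(k)}$ or $\bar m^{(k)}$'' case spells out what the paper dismisses in one line; neither changes the underlying argument.
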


The proof of Lemma~\ref{lemma:decomposition} is presented in Section~\ref{s:proof}.
It can be used to compute $m(x)$ for multiple flows sharing the same cache. 
Furthermore, applying Theorem~\ref{theorem:missP}, we can derive the miss probability for each flow. 

\begin{corollary}\label{corollary:3}
Consider $M$ flows without overlapped data, satisfying $\Pr[ R_0 =d^{(k)}_x | I_0 =k ] \sim c_k/x^{\alpha_k}$, $1\leq k \leq M$
and $\Pr[ I_0 =k]=\nu_k$, $\sum_{k=1}^M \nu_k=1$.  
Assume that the data items of flow~$k$ have identical sizes, i.e. $s_i^{(k)} = s^{(k)}, i \geq 1$.
For $\widetilde{\alpha}_1 \triangleq \min_{1\leq k \leq n} \alpha_k$ and $S_1=\{k \in {\mathbb Z} | \alpha_k = \widetilde{\alpha}_1, 1 \leq k \leq M\}$,
we have, 
for $k \in S_1$,
 \begin{align}
 \Pr[C_0>x | I_0 =k] \sim \frac{\Gamma(2-1/\widetilde{\alpha}_1)}{\widetilde{\alpha}_1-1}  \frac{{\gamma_1}^{\widetilde{\alpha}_1-1}} {{(\nu_kc_k)}^{1-\frac{1}{\widetilde{\alpha}_1}}} \frac{c_k}{x^{\widetilde{\alpha}_1-1}}, \nonumber
 \end{align} 
and for $k \in {S_1}^c$,
 \begin{align}
 \Pr[ C_0>x | I_0 =k] \sim \frac{\Gamma\left(2-1/\alpha_k \right) } {\alpha_k-1}   \frac{{\gamma_1}^{\widetilde{\alpha}_1-\frac{\widetilde{\alpha}_1}{\alpha_k}}} {{(\nu_kc_k)}^{1-\frac{1}{\alpha_k}}}
         \frac{c_k}{x^{\widetilde{\alpha}_1 - \frac{\widetilde{\alpha}_1}{\alpha_k}}}, \nonumber
 \end{align}
 where
\begin{align}
\gamma_1=\Gamma(1-1/\widetilde{\alpha}_1) \sum_{k \in S_1} s^{(k)}  (c_k \nu_k )^{1/\widetilde{\alpha}_1}. \label{eq:gamma1}
\end{align}
\end{corollary}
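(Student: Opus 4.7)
My plan is to apply Theorem~\ref{theorem:missP} directly, which requires two asymptotic quantities: the function $\Phi_k(\cdot)$ for each flow $k$, and the inverse $m^{\leftarrow}(x)$ for the pooled arrival process. Both can be computed in closed form under the regularly varying (Zipf-like) tail assumption, and then the result follows by plugging into the formula $\Pr[C_0>x\mid I_0=k]\sim \Gamma(\beta_k+1)/\Phi_k(m^{\leftarrow}(x))$.

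First, I would work out $\Phi_k$. Since the flows have disjoint data items, equation~(\ref{eq:complex}) collapses to $p^{(k)}_i=\nu_k q^{(k)}_i$, so $p^{(k)}_i\sim \nu_k c_k/i^{\alpha_k}$ and, by a Riemann-sum approximation analogous to~(\ref{eq:zipf2}), $\sum_{i\geq y} q^{(k)}_i \sim c_k/((\alpha_k-1)y^{\alpha_k-1})$. Inverting in (\ref{eq:relation}) gives
\[
\Phi_k(x)\sim \tfrac{\alpha_k-1}{c_k}\,(\nu_k c_k)^{1-1/\alpha_k}\,x^{1-1/\alpha_k},\qquad \beta_k=1-1/\alpha_k.
\]
Next, I would use Lemma~\ref{lemma:decomposition} to express $m(z)\sim \sum_k m^{(k)}(\nu_k z)$. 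For each flow, the same integral computation as in the proof of Corollary~\ref{corollary:zipf2} (now with the constant size $s^{(k)}$) yields $m^{(k)}(\nu_k z)\sim s^{(k)}\,\Gamma(1-1/\alpha_k)\,(c_k\nu_k z)^{1/\alpha_k}$. Summing these terms and keeping only the dominant contribution (those with smallest exponent $1/\alpha_k$, i.e., $k\in S_1$) gives
\[
m(z)\sim \gamma_1\,z^{1/\widetilde{\alpha}_1},\qquad m^{\leftarrow}(x)\sim (x/\gamma_1)^{\widetilde{\alpha}_1},
\]
with $\gamma_1$ as in~(\ref{eq:gamma1}). The regularity conditions (\ref{eq:mz}) and (\ref{eq:delta}) hold trivially with $\delta(x)\equiv 1$ since $m^{\leftarrow}$ is a pure power, and the logarithmic growth hypothesis $\varlimsup \log m^{\leftarrow}(x)/x =0$ is immediate.

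Finally I would substitute into Theorem~\ref{theorem:missP}: using $\Phi_k(m^{\leftarrow}(x))\sim \tfrac{\alpha_k-1}{c_k}(\nu_k c_k)^{1-1/\alpha_k}\gamma_1^{-(\widetilde{\alpha}_1-\widetilde{\alpha}_1/\alpha_k)}\,x^{\,\widetilde{\alpha}_1-\widetilde{\alpha}_1/\alpha_k}$ and $\Gamma(\beta_k+1)=\Gamma(2-1/\alpha_k)$ yields exactly the two stated asymptotics, with the $k\in S_1$ case arising from $\widetilde{\alpha}_1/\alpha_k=1$.

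The main obstacle is the decomposition step: I need to justify that summing the individual $m^{(k)}$ and retaining only the dominant $\alpha_k=\widetilde{\alpha}_1$ terms is valid asymptotically. The hypothesis of Lemma~\ref{lemma:decomposition} must be verified for these power-law tails (straightforward with $f^{(k)}(\delta)=(1+\delta)^{1/\alpha_k}\to 1$), and care must be taken to collect the constants from both the $\Phi_k$ and $m^{\leftarrow}$ computations so that the powers of $\nu_k$, $c_k$, and $\gamma_1$ balance to match the claimed expressions. Beyond that bookkeeping, every ingredient reduces to an application of results already established in the paper.
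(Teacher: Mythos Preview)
Your proposal is correct and follows essentially the same route as the paper's proof: compute $\Phi_k$ from the Zipf tail, use Lemma~\ref{lemma:decomposition} to assemble $m(z)$ from the per-flow contributions, retain the dominant $k\in S_1$ terms to invert $m$, and substitute into Theorem~\ref{theorem:missP}. One small slip: the dominant terms in $\sum_k z^{1/\alpha_k}$ are those with \emph{largest} exponent $1/\alpha_k$ (equivalently, smallest $\alpha_k$), not smallest---but you correctly land on $k\in S_1$, so this is only a wording issue.
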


\begin{proof}
For flow~$k$, $1 \leq k \leq n$, we have
 $$\Phi_k (x)\sim (\alpha_k-1)c_1^{-1/\alpha_k}(\nu_k x)^{1-1/\alpha_k},$$
 $$m^{(k)}(z) \sim s^{(k)} \Gamma(1-1/\alpha_k)  (c_k z)^{1/\alpha_k} .$$ 
Using the decomposition property of Lemma~\ref{lemma:decomposition},  
 $m(z)$ is asymptotically determined by flows with indices in $S_1$, 
  $$m(z) \sim \sum_{k \in S_1} s^{(k)}  (c_k \nu_k )^{1/\widetilde{\alpha}_1} \Gamma(1-1/\widetilde{\alpha}_1) z^{1/\widetilde{\alpha}_1},$$ 
 implying that $m^{\leftarrow}(z)$ is asymptotically equal to
 {\small
 \begin{align} \label{eq:minverseC3}
  z^{\widetilde{\alpha}_1} \left/ {\left( \sum_{k \in S_1} s^{(k)}  (c_k \nu_k )^{1/\widetilde{\alpha}_1} \Gamma(1-1/\widetilde{\alpha}_1) \right)}^{\widetilde{\alpha}_1} \right..
 \end{align}
 }
 \\
 \noindent Now, by Theorem~\ref{theorem:missP},  we can prove the corollary after straightforward computations. 
\end{proof}

Corollary~\ref{corollary:3} approximates the miss probabilities for multiple flows with different $\alpha_k$ when the cache capacity $x \to \infty$. 
When the cache capacity is small, this approximation is not accurate. 
In order to improve the accuracy,  denote by $\widetilde{\alpha}_2\triangleq \min_{k \in {S_1}^c} \alpha_k$ the second smallest value among all $\alpha_k$'s.
Defining $S_2=\{k \in {\mathbb Z} | \alpha_k = \widetilde{\alpha}_2, 1 \leq k \leq n\}$, 
we consider all flows in the set $\{k: k \in S_1 \cup S_2\}$, and derive
\begin{align}
m(z) \sim & \sum_{k \in S_1} s_k  (c_k \nu_k )^{1/\alpha_1^*} \Gamma(1-1/\alpha_1^*) z^{1/\alpha_1^*}  \nonumber\\
 & + \sum_{k \in S_2} s_k  (c_k \nu_k )^{1/\alpha_2^*} \Gamma(1-1/\alpha_2^*) z^{1/\alpha_2^*}. \nonumber
\end{align}
The inverse function of $m(z)$ can be better approximated by
\begin{align}
 m^\leftarrow(z) \sim z^{\widetilde{\alpha}_1} \left/ \left( \gamma_1+\gamma_2 (z/\gamma_1)^{\widetilde{\alpha}_1/\widetilde{\alpha}_2-1} \right)^{\widetilde{\alpha}_1} \right. , \label{eq:minverseaprxC3}
\end{align}
where $\gamma_2= \Gamma(1-1/\widetilde{\alpha}_2) \sum_{k \in S_2} s^{(k)}  (c_k \nu_k )^{1/\widetilde{\alpha}_2}$ and $\gamma_1$ is defined in (\ref{eq:gamma1}).
We obtain more accurate numerical results for miss probabilities using (\ref{eq:minverseaprxC3}) instead of (\ref{eq:minverseC3}) 
especially when the cache capacity is small, though the expressions in (\ref{eq:minverseC3})  and (\ref{eq:minverseaprxC3}) are asymptotically equivalent. 
Experiments~\ref{exp:2} in Section~\ref{s:experiment} validates this approximation. 
Alternatively, we also resort to numerical methods to directly evaluate $m^\leftarrow(z)$ for more complex cases.

\subsection{Connection to the Che approximation}\label{ss:che}
The miss probability of LRU algorithm has been extensively studied 
using the Che approximation~\cite{che2002}.  Now we show that 
the Che approximation is asymptotically accurate under certain conditions; see a related validity argument in~\cite{leaseMiss}.  
For multiple flows, 
the overall miss probability 
computed by the Che approximation is
\begin{align}
 \Pr_{che}[C_0>x|I_0=k] = \sum_{i=1}^\infty q_i^{(k)} e^{-p_i^{(k)}T}, \nonumber
\end{align}
where $T$ is the cache characteristic time as the unique solution to $
 \sum_{i=1}^\infty s_i ( 1- e^{-p_i^\circ T})=x$. 
\begin{theorem} \label{lemma:che}
 Under the conditions of Theorem~\ref{theorem:missP},  we have, as $x\to \infty$,
\begin{align}
\Pr_{che}[C_0>x|I_0=k] \sim \Pr[C_0>x|I_0=k] \sim \frac{\Gamma(1+\beta_k)}{\Phi_k(m^\leftarrow(x))}. \nonumber
\end{align}
\end{theorem}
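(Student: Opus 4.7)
The plan is to show that the Che approximation agrees asymptotically with the rigorous formula from Theorem~\ref{theorem:missP}; since the second $\sim$ in the statement is exactly Theorem~\ref{theorem:missP}, only the first equivalence requires work. I would start by identifying the characteristic time $T$ with an inverse image of $\bar{m}$: the defining equation $\sum_i s_i(1-e^{-p_i^\circ T}) = x$ is precisely $\bar{m}(T)=x$, so $T = \bar{m}^{\leftarrow}(x)$, where $\bar{m}(z)=\sum_i s_i(1-e^{-p_i^\circ z})$.

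The central step is then to establish $T \sim m^{\leftarrow}(x)$ as $x\to\infty$. To that end I would first prove $\bar{m}(z)\sim m(z)$ as $z\to\infty$. Condition~(\ref{eq:1class}) in Theorem~\ref{theorem:missP} forces $p_i^{(k)}\to 0$, hence $p_i^\circ\to 0$ as well, so the elementary expansion $(1-p_i^\circ)^z = \exp(-zp_i^\circ + O(zp_i^{\circ\,2}))$ is valid term by term. This is essentially the content of~(\ref{eq:approx}) in Lemma~\ref{lemma:decomposition} applied to the aggregate sequence $(p_i^\circ)$; the proof of~(\ref{eq:approx}) only uses the smallness of the probabilities, not the absence of overlap between the flows. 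Monotonicity and continuity of $\bar{m}$ and $m$, together with the regularity conditions~(\ref{eq:mz}) and~(\ref{eq:delta}), then upgrade $\bar{m}(z)\sim m(z)$ into $\bar{m}^{\leftarrow}(x)\sim m^{\leftarrow}(x)$, i.e.\ $T\sim m^{\leftarrow}(x)$.

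With $T\sim m^{\leftarrow}(x)$ in hand, the remainder is essentially bookkeeping. Relation~(\ref{eq:firstStep_in}), already obtained in the sketch of Theorem~\ref{theorem:missP}, states that for any sequence $n\to\infty$,
\begin{align*}
\sum_{i=1}^{\infty} q_i^{(k)} e^{-n p_i^{(k)}} \sim \frac{\Gamma(\beta_k+1)}{\Phi_k(n)}.
\end{align*}
Specializing to $n=T$, which tends to infinity with $x$, yields $\Pr_{che}[C_0>x|I_0=k]\sim \Gamma(\beta_k+1)/\Phi_k(T)$. Since $\Phi_k(x)\sim x^{\beta_k}l_k(x)$ is regularly varying of index $\beta_k$ and $T/m^{\leftarrow}(x)\to 1$, the uniform convergence theorem for regularly varying functions gives $\Phi_k(T)\sim \Phi_k(m^{\leftarrow}(x))$, which closes the first equivalence and matches the right-hand side of the theorem.

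The hard part will be pinning down $\bar{m}(z)\sim m(z)$ and its inversion with the right uniformity. The sum defining $m(z)$ receives non-negligible contributions from indices $i$ where $p_i^\circ z$ is of order unity, so the pointwise approximation $1-(1-p_i^\circ)^z\approx 1-e^{-p_i^\circ z}$ must be controlled simultaneously over a large range of $i$. The bounded sizes $s_i\leq \bar{s}$, the regularity condition~(\ref{eq:mz}) through $\delta(x)$, and the additional restriction $\varlimsup_{x\to\infty}\log(m^{\leftarrow}(x))/(\delta^2(x)x)=0$ all conspire to absorb the perturbation at the inversion step, but the bookkeeping parallels, and is subsumed by, the arguments already carried out for Lemma~\ref{lemma:decomposition} combined with~(\ref{eq:firstStep_in}).
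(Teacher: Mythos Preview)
Your proposal is correct and follows essentially the same route as the paper: identify $T=\bar m^{\leftarrow}(x)$, invoke Lemma~\ref{lemma:decomposition} to obtain $T\sim m^{\leftarrow}(x)$, show $\sum_i q_i^{(k)}e^{-p_i^{(k)}T}\sim \Gamma(\beta_k+1)/\Phi_k(T)$, and conclude via the regular variation of $\Phi_k$. The only cosmetic difference is that the paper does not cite the heuristic~(\ref{eq:firstStep_in}) directly but instead sandwiches the exponential sum---a lower bound from $e^{-p}\ge 1-p$ together with~(\ref{eq:firstStep}), and an upper bound by rerunning the argument of~(\ref{eq:repUpper})--(\ref{eq:asympN5})---which is exactly the ``bookkeeping'' you anticipate.
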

The proof of Theorem~\ref{lemma:che} is presented in Section~\ref{ss:p:lemma:che}.

\section{Pooling and separation} \label{s:PorS} 
We first characterize the self-organizing behavior of LRU caching for multiple flows in Section~\ref{ss:noOverlap}.  
Then, we study how the interactions of competing flows impact the individual ones in Section~\ref{ss:individualFlow}. 
The consequences of overlapped data items across different flows are investigated in Section~\ref{ss:overlap}.
Based on the insights,  we discuss engineering implications in Section~\ref{ss:engrImplication}.

A pooling scheme serves the $M$ request flows jointly using the cache space of size $x$.
A separation scheme divides the cache space $x$ into $M$ parts according to fractions $\{u_k\}_{1\leq k \leq M}$, $\sum_{k=1}^M u_k =1$, and allocates $u_kx$ to flow~$k$. 

\subsection{Self-organizing behavior of pooling}\label{ss:noOverlap}

Based on the asymptotic miss ratios derived in Theorem~\ref{theorem:missP}, we show that, when multiple flows have similar distributions and identical data item sizes,
resource pooling asymptotically gives the best overall hit ratio achieved by the optimal separation scheme.    Otherwise, the optimal separation scheme results in a better 
overall miss ratio. Note that the optimal separation scheme is static while the pooling scheme is adaptive without any parameter tuning or optimization. This explains why pooling is better in Fig.~\ref{fig:requestRate}. 
Denote by $\Pr_{s}^*[C_0>x] $ and $\Pr_{p}[C_0>x]$ the overall miss probabilities 
under the optimal separation $\{u^*_k\}$ and under resource pooling, respectively.

\begin{theorem} \label{lemma:samealpha}
For $M$ flows without overlapped data, following $\Pr\left[R_0=d_x^{(k)} | I_0=k \right] \sim c_k/x^{\alpha_k}$, $\alpha_k>1$, $1\leq k \leq M$ 
and the data items of flow~$k$ having the same size $s_i^{(k)}=s^{(k)}$, we have
\begin{align}\label{eq:adaptive}
  \lim_{x\to \infty}  \Pr_{p}[C_0>x]/ \Pr_{s}^*[C_0>x] \geq 1,
\end{align}
and the equality holds  if and only if $s^{(1)}=s^{(2)}=\cdots=s^{(M)}$.
\end{theorem}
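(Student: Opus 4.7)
The plan is to invoke Corollary~\ref{corollary:3} for the pooled system and Theorem~\ref{theorem:missP} (specialized as in Corollary~\ref{corollary:zipf2} but adapted to fixed item size $s^{(k)}$) for each isolated flow under separation, and then to reduce the ratio of the leading coefficients to a single Jensen-type inequality. Let $\widetilde\alpha_1 = \min_k \alpha_k$ and $S_1=\{k:\alpha_k=\widetilde\alpha_1\}$. Corollary~\ref{corollary:3} gives
\begin{align*}
\Pr_p[C_0>x] \sim \frac{\Gamma(2-1/\widetilde\alpha_1)\,\gamma_1^{\widetilde\alpha_1-1}}{(\widetilde\alpha_1-1)\,x^{\widetilde\alpha_1-1}}\sum_{k\in S_1}(\nu_k c_k)^{1/\widetilde\alpha_1},
\end{align*}
since flows with $\alpha_k>\widetilde\alpha_1$ contribute only at strictly smaller orders.

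For separation with fractions $u_k$, applying Theorem~\ref{theorem:missP} to each isolated flow with cache $u_k x$ produces a per-flow miss of the form $\Gamma(2-1/\alpha_k)\Gamma(1-1/\alpha_k)^{\alpha_k-1}c_k(s^{(k)})^{\alpha_k-1}/\bigl((\alpha_k-1)(u_k x)^{\alpha_k-1}\bigr)$. Only $k\in S_1$ contribute at order $x^{-(\widetilde\alpha_1-1)}$, so it suffices to minimize $\sum_{k\in S_1}\nu_k c_k(s^{(k)})^{\widetilde\alpha_1-1}/u_k^{\widetilde\alpha_1-1}$ subject to $\sum_{k\in S_1}u_k\leq 1$. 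A Lagrange computation yields $u_k^\ast\propto(\nu_k c_k)^{1/\widetilde\alpha_1}(s^{(k)})^{1-1/\widetilde\alpha_1}$, and plugging back makes the optimal leading coefficient proportional to $D^{\widetilde\alpha_1}$ with $D=\sum_{k\in S_1}(\nu_k c_k)^{1/\widetilde\alpha_1}(s^{(k)})^{1-1/\widetilde\alpha_1}$.

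Forming the ratio and cancelling the shared Gamma factors, setting $a_k=(\nu_k c_k)^{1/\widetilde\alpha_1}$, $w_k=a_k/\sum_j a_j$ and letting $S$ be the random variable with $\Pr[S=s^{(k)}]=w_k$, the ratio collapses to
\begin{align*}
\lim_{x\to\infty}\frac{\Pr_p[C_0>x]}{\Pr_s^\ast[C_0>x]} = \left(\frac{\expect[S]^{(\widetilde\alpha_1-1)/\widetilde\alpha_1}}{\expect\!\bigl[S^{(\widetilde\alpha_1-1)/\widetilde\alpha_1}\bigr]}\right)^{\widetilde\alpha_1}.
\end{align*}
Since $\widetilde\alpha_1>1$, the exponent $(\widetilde\alpha_1-1)/\widetilde\alpha_1\in(0,1)$ and $t\mapsto t^{(\widetilde\alpha_1-1)/\widetilde\alpha_1}$ is strictly concave. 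Jensen's inequality then yields the ratio $\geq 1$ with equality iff $S$ is almost surely constant, i.e., $s^{(k)}$ coincide on $S_1$ (which, when all $\alpha_k$ are equal so that $S_1=\{1,\dots,M\}$, is exactly the stated condition $s^{(1)}=\cdots=s^{(M)}$).

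The main obstacle is rigorously justifying the Lagrange step when the $\alpha_k$ differ: one must show that giving the flows in $S_1^c$ a vanishing but positive cache share keeps their contribution $o(x^{-(\widetilde\alpha_1-1)})$ while freeing essentially the full cache budget for $S_1$. This can be handled by setting $u_k=x^{-\eta}$ for small $\eta>0$ when $k\in S_1^c$, verifying by direct substitution that the per-flow miss is of order $x^{-(\alpha_k-1)(1-\eta/(\alpha_k-1))}$, which stays $o(x^{-(\widetilde\alpha_1-1)})$ for $\eta$ sufficiently small, and noting that the remaining $S_1$-share is $1-O(x^{-\eta})\to 1$ so that the $S_1$-restricted Lagrange optimum is asymptotically achievable. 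After this reduction, everything else is routine algebra, and the comparison reduces to the clean Jensen step above.
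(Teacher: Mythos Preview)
Your argument follows the same overall structure as the paper's proof: compute the pooled asymptotics via Corollary~\ref{corollary:3}, compute the separated asymptotics per flow via Theorem~\ref{theorem:missP}, optimize the fractions $u_k$ by Lagrange/KKT, and compare the leading coefficients by an inequality. The paper invokes H\"older's inequality on the product $\bigl(\sum_k a_k\bigr)\bigl(\sum_k a_k s^{(k)}\bigr)^{\widetilde\alpha_1-1}$ versus $\bigl(\sum_k a_k (s^{(k)})^{1-1/\widetilde\alpha_1}\bigr)^{\widetilde\alpha_1}$ directly, whereas you recast the same comparison as Jensen for the concave map $t\mapsto t^{(\widetilde\alpha_1-1)/\widetilde\alpha_1}$ under the probability weights $w_k$; these are equivalent reformulations of the same inequality, so there is no substantive divergence. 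Your treatment of the $S_1^c$ flows via the vanishing share $u_k=x^{-\eta}$ is actually more careful than the paper, which simply asserts that the equal-$\alpha$ argument ``can be repeated'' after restricting to $S_1$ without addressing how to keep the $S_1^c$ miss probabilities negligible while still releasing their budget; your observation that the equality condition then only constrains $\{s^{(k)}:k\in S_1\}$ is likewise a correct sharpening of the stated iff condition.
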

This result explains the simulation in Fig.\ref{fig:diffSize} when data item sizes are different.  In practice, data item sizes 
vary, and they can be considered approximately equal if within the same range, as used by slabs of Memcached~\cite{JiangSong,memcachNSDI}. 
Note that Theorem~\ref{lemma:samealpha} only characterizes an asymptotic result.  When the cache size is not large enough and $\alpha_k$'s are different, 
resource pooling can be worse than the optimal separation, as studied in~\cite{chu2016allocating}.  As commented after Corollary~\ref{corollary:3}, a better 
approximation for small cache sizes is to use Theorem~\ref{theorem:missP} by numerically evaluating $m^{\leftarrow}(x)$.  Theorem~\ref{lemma:samealpha} also shows that when data item sizes vary significantly,  resource pooling could be worse than separation, as illustrated in Fig.~\ref{fig:diffSize}. 
\begin{proof} 
First, we assume $\alpha_k=\alpha, 1\leq k \leq M$. 
To characterize resource separation,  by Theorem~\ref{theorem:missP}, we obtain
\begin{align}
 \Pr_{s}[C_0>x] & = \sum_{k=1}^M \Pr[I_0=k] \Pr_{s}[C_0>u_k x | I_0=k] \nonumber \\
 & \sim \sum_{k=1}^M \nu_k c_k \frac{{\Gamma(1-1/\alpha)}^\alpha}{\alpha} {\left( \frac{s^{(k)}} {u_k x} \right)}^{\alpha-1}. \label{eq:pstatic}
\end{align}
Since the optimal separation method $u^*=(u_1^*,u_2^*,\cdots,u_k^*)$ minimizes the overall asymptotic miss probability, we have
\begin{align}
 \text{Minimize} \qquad & \sum_{k=1}^M \nu_k c_k \frac{{\Gamma(1-1/\alpha)}^\alpha}{\alpha} {\left( \frac{s^{(k)}} {u_k x} \right)}^{\alpha-1} \nonumber \\
 \;\text{Such that} \qquad & \sum_{k=1}^M u_k = 1,    u_k \geq 0. \nonumber 
\end{align}
The solution of this convex optimization problem satisfies the KKT conditions, and therefore, 
\begin{align}
 \frac{c_1 \nu_1 {(s^{(1)})}^{\alpha-1}}{{u_1^*}^\alpha} = \frac{c_2 \nu_2 {(s^{(2)})}^{\alpha-1} }{{u_2^*}^\alpha} = \cdots =\frac{c_M \nu_M {(s^{(M)})}^{\alpha-1}}{{u_M^*}^\alpha}, \nonumber 
\end{align}
resulting in
\begin{align}
 u_k^*= \frac{{(c_k \nu_k)}^{1/\alpha} {(s^{(k)})}^{1-1/\alpha}}{\sum_{i=1}^M {(c_i\nu_i)}^{1/\alpha}  {(s^{(i)})}^{1-1/\alpha}} ,  \quad k=1,2,\cdots,M.  \label{eq:solution1}
\end{align}
 From (\ref{eq:pstatic}) and (\ref{eq:solution1}), we obtain 
 {\small
\begin{align}
 \Pr_{s}^*[C_0>x]
 &\sim \frac{{\Gamma(1-1/\alpha)}^\alpha}{\alpha x^{\alpha-1}} {\left( \sum_{k=1}^M {(c_k\nu_k)}^{1/\alpha}{(s^{(k)})}^{1-1/\alpha}\right)}^\alpha . \label{eq:pdivide}
\end{align}
}
To study resource pooling, 
 we obtain, by Corollary~\ref{corollary:3},
\begin{align} 
 & \Pr_{p}  [C_0>x]  = \sum_{k=1}^M \Pr[I_0=k] \Pr_{p}[C_0>x | I_0=k] \nonumber \\
 & \sim \sum_{k=1}^M \nu_k \frac{{\Gamma(1-1/\alpha)}^\alpha}{\alpha} \frac{{\left(\sum_{i=1}^M {(c_i\nu_i)}^{1/\alpha} s^{(i)}\right)}^{\alpha-1}} {{\nu_k}^{1-1/\alpha}}  \frac{{c_k}^{1/\alpha}}{x^{\alpha-1}} \nonumber \\
 & =  \frac{{\Gamma(1-1/\alpha)}^\alpha}{\alpha x^{\alpha-1}} \left( \sum_{k=1}^M {( c_k \nu_k )}^{1/\alpha} \right)   {\left(\sum_{i=1}^M {(c_i \nu_i)}^{1/\alpha} s^{(i)}\right)}^{\alpha-1}, \nonumber
\end{align}
which, using (\ref{eq:pdivide}) and H\"{o}lder's inequality, proves (\ref{eq:adaptive}). 
The equality holds if and only if $s^{(1)}=s^{(2)}=\cdots = s^{(n)}$.

Now,  if $\alpha_k$'s are not identical,  let $\widetilde{\alpha}_1=\min_{1 \leq k \leq n} \alpha_k$ and $S_1=\{k \in {\mathbb Z} | \alpha_k = \widetilde{\alpha}_1, 1 \leq k \leq n\}$.
By Corollary~\ref{corollary:3}, we have, for resource pooling, 
\begin{align}
\Pr_{p}[C_0>x]  
 \sim & \sum_{k \in S_1} \Pr[I_0=k] \Pr_{p}[C_0>x | I_0=k]. \nonumber 
\end{align}

For separation,  by (\ref{eq:zipf4}), we have, as $x \to \infty$,
\begin{align}
 \Pr_{s}[C_0>u_kx | I_0=k] \sim \frac{{\Gamma(1-1/\alpha_k)}^{\alpha_k}}{\alpha_k} \frac{c_k}{{(u_kx)}^{\alpha_k-1}}. \nonumber
\end{align}
Thus, the overall miss probability is
\begin{align}
 \Pr_{s}[C_0>x] 
 & \sim  \sum_{k \in S_1} \nu_k c_k \frac{{\Gamma(1-1/\alpha_k)}^{\alpha_k}}{\alpha_k} {\left(\frac{s^{(k)}}{u_kx}\right)}^{\alpha_k-1}. \nonumber
\end{align}
Thus, the same arguments for the case $\alpha_k=\alpha$ can be repeated to prove (\ref{eq:adaptive}) in this case. 
\end{proof}

\subsection{Impacts on individual flows}\label{ss:individualFlow}
When the QoS (quality-of-service) of individual flows is important,  we need to guarantee the miss ratio of each flow.  
The following theorem shows that,  for each flow,  cache pooling 
asymptotically achieves the same miss ratio as the optimal separation under certain conditions. 
Interestingly, the miss ratios of multiple competing flows decrease according to $c_k^{1/\alpha}\nu_k^{1/\alpha-1}, 1\leq k \leq M$
when sharing the same cache. 
\begin{corollary}\label{corollary:4}
For $M$ flows under the conditions of Theorem~\ref{lemma:samealpha} with $s^{(1)}=s^{(2)}=\cdots=s^{(M)}$ and $\alpha_k=\alpha$, 
we have
\begin{align}
  \lim_{x\to \infty}  \Pr_{p}[C_0>x | I_0=k]/ \Pr_{s}^*[C_0> u^{\ast}_kx | I_0=k] = 1. \nonumber
\end{align}
Furthermore, the miss ratios of any two flows $i, j$ satisfy  
\begin{align}
\lim_{x\to \infty} & \frac{\Pr_{p}[C_0>x | I_0=i]}{ \Pr_{p}[C_0>x | I_0=j]} = 
\lim_{x\to \infty} \frac{ \Pr_{s}^*[C_0> u^{\ast}_i x | I_0=i]}{\Pr_{s}^*[C_0> u^{\ast}_j x | I_0=j]} \nonumber\\
&=\frac{c_i^{1/\alpha}\nu_i^{1/\alpha-1} }{ c_j^{1/\alpha}\nu_j^{1/\alpha-1} }. \nonumber
\end{align} 
\end{corollary}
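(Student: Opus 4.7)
The plan is to derive closed-form asymptotic expressions for both sides using results already established in the excerpt, then show by direct algebraic comparison that they coincide and that the common constant of the $x^{-(\alpha-1)}$ term factors as $c_k^{1/\alpha}\nu_k^{1/\alpha-1}$ times a quantity independent of $k$. The conceptual content is that pooling ``chooses'' the same slice for flow~$k$ that the optimal separation does, but there is no real analytic difficulty beyond bookkeeping; everything is driven by Corollary~\ref{corollary:3} and the formula for $u_k^\ast$ derived inside the proof of Theorem~\ref{lemma:samealpha}.

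First, I would write down the pooling side. Since all $\alpha_k$ equal a common $\alpha$ and all $s^{(k)}=s$, the set $S_1$ of Corollary~\ref{corollary:3} equals $\{1,\ldots,M\}$, $\widetilde\alpha_1=\alpha$, and the constant $\gamma_1$ of~(\ref{eq:gamma1}) becomes
\begin{align*}
\gamma_1 \;=\; s\,\Gamma(1-1/\alpha)\sum_{i=1}^{M}(c_i\nu_i)^{1/\alpha}.
\end{align*}
Corollary~\ref{corollary:3} then yields, as $x\to\infty$,
\begin{align*}
\Pr_{p}[C_0>x\mid I_0=k] \;\sim\; \frac{\Gamma(2-1/\alpha)}{\alpha-1}\,\frac{\gamma_1^{\alpha-1}}{(\nu_kc_k)^{1-1/\alpha}}\,\frac{c_k}{x^{\alpha-1}}.
\end{align*}

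Next, I would compute the separation side at the optimal allocation. Formula~(\ref{eq:solution1}) in the proof of Theorem~\ref{lemma:samealpha}, specialized to $s^{(k)}\equiv s$, gives $u_k^\ast=(c_k\nu_k)^{1/\alpha}/\sum_{i}(c_i\nu_i)^{1/\alpha}$. Substituting this into the per-flow asymptotic $\Pr_{s}[C_0>u_kx\mid I_0=k]\sim c_k\,\Gamma(1-1/\alpha)^\alpha (s/(u_kx))^{\alpha-1}/\alpha$ (which is the single-flow expression used in the proof of Theorem~\ref{lemma:samealpha}) yields
\begin{align*}
\Pr_{s}^*[C_0>u_k^\ast x\mid I_0=k] \;\sim\; \frac{\Gamma(1-1/\alpha)^{\alpha}}{\alpha}\,\frac{s^{\alpha-1}\bigl(\sum_{i}(c_i\nu_i)^{1/\alpha}\bigr)^{\alpha-1}}{(c_k\nu_k)^{1-1/\alpha}}\,\frac{c_k}{x^{\alpha-1}}.
\end{align*}
The only arithmetic hurdle is recognising that $\gamma_1^{\alpha-1}=s^{\alpha-1}\Gamma(1-1/\alpha)^{\alpha-1}\bigl(\sum_i(c_i\nu_i)^{1/\alpha}\bigr)^{\alpha-1}$ and invoking the functional identity $\Gamma(2-1/\alpha)/(\alpha-1)=\Gamma(1-1/\alpha)/\alpha$; combining the two makes the pooling and separation asymptotics match term by term, proving the first equality.

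For the second part, I would take the ratio of the pooling expressions for flows $i$ and $j$: the factor $\Gamma(2-1/\alpha)\gamma_1^{\alpha-1}x^{-(\alpha-1)}/(\alpha-1)$ cancels, leaving
\begin{align*}
\frac{\Pr_{p}[C_0>x\mid I_0=i]}{\Pr_{p}[C_0>x\mid I_0=j]} \;\sim\; \frac{c_i/(\nu_ic_i)^{1-1/\alpha}}{c_j/(\nu_jc_j)^{1-1/\alpha}} \;=\; \frac{c_i^{1/\alpha}\nu_i^{1/\alpha-1}}{c_j^{1/\alpha}\nu_j^{1/\alpha-1}},
\end{align*}
and the identical cancellation on the separation side gives the same value. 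The main obstacle I anticipate is purely notational: keeping straight the several reciprocal exponents (so that factors of the form $(c_k\nu_k)^{\pm(1-1/\alpha)}$ land in the right place) and invoking the Gamma-function identity at the right moment. No new asymptotic machinery beyond Corollary~\ref{corollary:3} and the KKT computation from Theorem~\ref{lemma:samealpha} is needed.
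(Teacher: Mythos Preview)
Your proposal is correct and follows essentially the same route the paper indicates: invoke Corollary~\ref{corollary:3} for the pooling asymptotic, take the optimal fractions $u_k^\ast$ from~(\ref{eq:solution1}) together with the per-flow separation asymptotic in~(\ref{eq:pstatic}), and match constants using $\Gamma(2-1/\alpha)=\tfrac{\alpha-1}{\alpha}\Gamma(1-1/\alpha)$. The paper itself only says the proof ``is based on the same arguments in the proof of Theorem~\ref{lemma:samealpha},'' and your bookkeeping fills that in accurately.
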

The proof of this corollary is based on the same arguments in the proof of Theorem~\ref{lemma:samealpha}. 
 This result quantifies the empirical observation~\cite{JiangSong} 
 that mixing multiple flows benefits the ones with large arrival rates at the expense of the others 
 with small arrival rates. It also shows that the popularity distributions need to be considered if $c_i\neq c_j$.  
Therefore,  if arrival rates differ significantly,  mixing flows requires extra caution. 
Simulations for validating Corollary~\ref{corollary:4} is in Section~\ref{s:experiment}.

\subsection{Overlapped data items}\label{ss:overlap}
In this section, we show that when overlapped data items exceed certain levels, pooling cache 
space together can even improve the performance of every flow. Since overlapped data items across more than two flows are complicated, 
we only consider two flows with unit-sized data items. 
Notably,  there always exists
a good region of parameters such that the miss probabilities of both flows under pooling are better than under separation; see Experiment~\ref{exp:3} in Section~\ref{s:experiment}.  
    
Since the requested data items can overlap (see Fig.~\ref{fig:dataOverlap}),
we introduce $3$ disjoint classes of data items, $A, B$ and $D$ for the two flows. 
Flow~1 and~2 request data items from class $A=\{d^{(A)}_i, i=1, 2, \cdots\}$
and class $B=\{d^{(B)}_i, i=1, 2, \cdots\}$, respectively. 
Class $D=\{d^{(D)}_i, i=1, 2, \cdots\}$ represents the common data items that are requested by both flow~1 and~2. 
We use $J_n=A, J_n=B, J_n=D$ to indicate that the request~$n$ is for class~$A$, $B$ and $D$,  respectively.  
Let $\Pr \left[ J_n=A  \mid I_n=1 \right]=p^{(1)}_A$, $\Pr \left[ J_n=D  \mid I_n=1 \right]=p^{(1)}_D$
with $p^{(1)}_A+p^{(1)}_D=1$, and 
$\Pr \left[ J_n=B  \mid I_n=2 \right]=p^{(2)}_B$, $\Pr \left[ J_n=D  \mid I_n=2 \right]=p^{(2)}_D$ with $p^{(2)}_B+p^{(2)}_D=1$.   
Class~$A$ and~$B$ have $\Pr\left[ R_0 =d^{(A)}_x  \mathlarger{|} I_0 =1, J_0=A \right]$ $\sim c_A/x^{\alpha}$,   
$\Pr\left[ R_0 =\right. $ $\left. d^{(B)}_x | I_0 =2, J_0=B \right]$ $\sim c_B/x^{\alpha}$.
For class~$D$,  we assume that
$\Pr\left[ R_0 =d^{(D)}_x | I_0 =k, J_0=D \right] \sim c_{D}/x^{\alpha}, k=1, 2$. 

An optimal separation scheme has been proposed 
in~\cite{dehghan2017sharing} to serve classes $A,B,D$ in three isolated parts of the whole cache space.
Since the three classes do not have overlapped data items, Theorem~\ref{lemma:samealpha} implies that
the optimal separation is asymptotically equivalent to pooling. 
However, this isolation scheme requires a lot of tracking information, and is difficult to implement in practice.  
We consider a practical constraint that 
a flow is the smallest unit that cannot be further divided into sub-flows.   
In this case, the optimal separation is not always the best. In fact, it can be worse than resource pooling if enough data overlap is present.

For a static separation $u=(u_1,u_2)$, define a good region $\mathcal{G}_u$
for positive parameters $P=\left( \nu_1,\nu_2, c_A, c_B, p_A^{(1)}, p_B^{(2)}, p_D^{(1)},\right.$ $\left.p_D^{(2)} \right)$, which satisfy, 
for $p_D^* = p_D^{(1)}\nu_1+p_D^{(2)}\nu_2 >0$,
{\small
\begin{align}\label{eq:assumption1}
 \frac{{\left( \left(c_Ap_A^{(1)} \right)^{1/\alpha} + \left(c_Dp_D^{(1)}\right)^{1/\alpha} \right)}^\alpha} 
         {{\left( \left(c_Ap_A^{(1)}\nu_1\right)^{1/\alpha} + \left(c_Bp_B^{(2)}\nu_2\right)^{1/\alpha} + \left(c_Dp_D^*\right)^{1/\alpha} \right)}^{\alpha-1}} \nonumber \\
 > {u_1}^{\alpha-1} \left( \frac{\left(c_Ap_A^{(1)}\right)^{1/\alpha}}{{\nu_1}^{1-1/\alpha}} 
 + \frac{{c_D}^{1/\alpha}p_D^{(1)}}{{p_D^*}^{1-1/\alpha}} \right),
\end{align}
}
\\
\noindent with another symmetric constraint that replaces  $u_1$, $c_A$, $c_B$, $p_A^{(1)}$, $p_B^{(2)}$, $p_D^{(1)}$ in (\ref{eq:assumption1}) with $u_2$, $c_B$, $c_A$, $p_B^{(2)}$, $p_A^{(1)}$, $p_D^{(2)}$, respectively. 
The following corollary shows that when the parameters satisfy $P \in \mathcal{G}_u$, 
both flows have smaller miss ratios by resource pooling than by the static separation $u$.
Remarkably,  the parameters in $P$ for the optimal static separation $u^{\ast}$ that minimizes the overall miss ratio (defined in Section~\ref{ss:noOverlap}) are
always in the good region $\mathcal{G}_{u^{\ast}}$, although this region is defined to study the miss ratios of individual flows. 
\begin{corollary}\label{corollary:5}
For any positive $u = (u_1,u_2)$, if $P \in \mathcal{G}_u$ and is strictly positive, then we have, for $k=1, 2$,
\begin{align}
\lim_{x \to \infty }\Pr_p[C_0>x|I_0=k] / \Pr_s[C_0>u_kx|I_0=k] < 1. \nonumber 
\end{align}
Furthermore, if $u=u^{\ast}$, then we always have $P \in \mathcal{G}_{u^{\ast}}$.
\end{corollary}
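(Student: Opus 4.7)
The plan is to reduce Corollary~\ref{corollary:5} to a direct application of Corollary~\ref{corollary:3} by viewing the three disjoint classes $A$, $B$, $D$ as three non-overlapping effective sub-flows. By Poisson marking, the $A$, $B$, $D$ request streams are mutually independent Poisson processes with proportions $\mu_A = \nu_1 p_A^{(1)}$, $\mu_B = \nu_2 p_B^{(2)}$, $\mu_D = p_D^\ast$, each Zipf with common exponent $\alpha$ and constants $c_A$, $c_B$, $c_D$. Since the event $\{C_0 > x\}$ depends only on the requested item and the past, not on the physical flow label, I have $\Pr[\mathrm{miss}\mid I_0=k, J_0=Y] = \Pr[\mathrm{miss}\mid J_0=Y]$ for $Y\in\{A,B,D\}$, hence
\begin{equation*}
\Pr[\mathrm{miss}\mid I_0=1] = p_A^{(1)}\Pr[\mathrm{miss}\mid J_0=A] + p_D^{(1)}\Pr[\mathrm{miss}\mid J_0=D],
\end{equation*}
with a symmetric decomposition for flow~$2$.

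First I would invoke Corollary~\ref{corollary:3} under pooling (cache $x$, three sub-flows) and add the $A$ and $D$ contributions to obtain
\begin{equation*}
\Pr_p[\mathrm{miss}\mid I_0=1] \sim \frac{\Gamma(2-1/\alpha)\,\Gamma(1-1/\alpha)^{\alpha-1}\,W^{\alpha-1}}{(\alpha-1)\,x^{\alpha-1}}\,X_1,
\end{equation*}
where $W = (c_A p_A^{(1)}\nu_1)^{1/\alpha} + (c_B p_B^{(2)}\nu_2)^{1/\alpha} + (c_D p_D^\ast)^{1/\alpha}$ and $X_1$ is the bracketed factor on the right-hand side of~(\ref{eq:assumption1}). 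Second, applying the same corollary to flow~$1$ in isolation (cache $u_1 x$, two sub-flows $A$ and $D$ with proportions $p_A^{(1)}$, $p_D^{(1)}$) gives
\begin{equation*}
\Pr_s[\mathrm{miss}\mid I_0=1] \sim \frac{\Gamma(2-1/\alpha)\,\Gamma(1-1/\alpha)^{\alpha-1}\,V_1^{\alpha}}{(\alpha-1)\,(u_1 x)^{\alpha-1}},
\end{equation*}
with $V_1 = (c_A p_A^{(1)})^{1/\alpha} + (c_D p_D^{(1)})^{1/\alpha}$. All gamma factors cancel in the ratio, leaving $\Pr_p/\Pr_s \to u_1^{\alpha-1} W^{\alpha-1} X_1 / V_1^{\alpha}$, which is strictly below~$1$ exactly when~(\ref{eq:assumption1}) holds. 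The analogous conclusion for flow~$2$ follows by interchanging $A\leftrightarrow B$ and $\nu_1\leftrightarrow\nu_2$.

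For the second claim, I would compute $u^\ast$ from the convex program of minimizing $\nu_1 V_1^{\alpha}/u_1^{\alpha-1} + \nu_2 V_2^{\alpha}/u_2^{\alpha-1}$ subject to $u_1+u_2=1$; KKT yields $u_k^\ast = \nu_k^{1/\alpha} V_k/S$ with $S = \nu_1^{1/\alpha} V_1 + \nu_2^{1/\alpha} V_2$. Substituting this into~(\ref{eq:assumption1}) and clearing common factors, the condition at $u^\ast$ reduces to
\begin{equation*}
(a + d_1)\bigl(S/W\bigr)^{\alpha-1} > a + d_1\,(\nu_1 p_D^{(1)}/p_D^\ast)^{1-1/\alpha},
\end{equation*}
with $a=(c_A p_A^{(1)})^{1/\alpha}$ and $d_1=(c_D p_D^{(1)})^{1/\alpha}$. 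I would close this with two strict-inequality steps, both guaranteed by $P$ strictly positive: strict concavity of $t\mapsto t^{1/\alpha}$ for $\alpha>1$ gives $(\nu_1 c_D p_D^{(1)})^{1/\alpha} + (\nu_2 c_D p_D^{(2)})^{1/\alpha} > (c_D p_D^\ast)^{1/\alpha}$, whence $S>W$ and $(S/W)^{\alpha-1} > 1$; and $\nu_1 p_D^{(1)} < p_D^\ast$ forces $(\nu_1 p_D^{(1)}/p_D^\ast)^{1-1/\alpha} < 1$. Combining these, the displayed inequality is strict, and flow~$2$ follows by symmetry, so $P\in\mathcal{G}_{u^\ast}$.

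The main obstacle is accounting rather than a hard inequality: I must carefully separate the conditional splits $p_Y^{(k)}$ (flow~$k$'s allocation across its classes) from the effective pooling proportions $\mu_Y$ of the three virtual sub-flows, and then verify that the simplified form of the pooling-to-separation ratio matches the precise algebraic shape of~(\ref{eq:assumption1}). A minor subtlety is that the two strictness steps in the second claim rely on \emph{every} coordinate of $P$ being strictly positive; otherwise the subadditivity of $t\mapsto t^{1/\alpha}$ or the monotonicity $\nu_1 p_D^{(1)} < p_D^\ast$ collapses to equality and the inclusion $P\in\mathcal{G}_{u^\ast}$ can fail on the boundary.
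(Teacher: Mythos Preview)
Your proposal is correct and follows exactly the route the paper indicates: the paper's proof consists of the single sentence ``a straightforward computation based on Theorem~\ref{theorem:missP},'' and what you have written is precisely that computation carried out via Corollary~\ref{corollary:3} applied to the three virtual sub-flows $A,B,D$. In particular, your argument for the second claim (computing $u^\ast$ by KKT, then using strict subadditivity of $t\mapsto t^{1/\alpha}$ to get $S>W$ and $\nu_1 p_D^{(1)}<p_D^\ast$) supplies details the paper entirely omits, and your closing remark about strict positivity of $P$ being needed for the strictness is accurate.
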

The proof is a straightforward computation based on Theorem~\ref{theorem:missP}. 
This corollary also implies that $\mathcal{G}_u$ is always nonempty.
We use simulations to validate Corollary~\ref{corollary:5} in Section~\ref{s:experiment}.

\subsection{Engineering Implications} \label{ss:engrImplication}
Whether resource pooling or separation should be used for LRU caching is complicated. There are no straight yes or no answers, 
depending on four critical factors. They include the popularity distributions, request rates, data item sizes and overlapped 
data across different flows.  This problem becomes even more complicated due to engineering issues. However,  there are still guidelines to improve the miss ratios.  

Our analysis shows that for large cache spaces 
it is beneficial to jointly serve multiple flows if their data item sizes and popularity distributions are similar and their arrival rates do not differ significantly.
Although the optimal static resource separation scheme has been shown to always theoretically achieve the best performance under 
certain assumptions~\cite{dehghan2017sharing}, 
in practice the number of separate clusters deployed in service, e.g., Memcached clusters, is relatively small~\cite{JiangSong}.  This may be partially attributed to the self-organizing behavior of LRU for resource pooling.  As shown in Theorem~\ref{lemma:samealpha},  resource pooling can adaptively achieve the optimal resource allocation for multiple competing flows asymptotically when the data item sizes are equal. In practice two data items can be considered to have an approximately equal size if within the same range.  This property is especially beneficial when the request statistics, including the distributions and rates,  are time-varying. Nevertheless, 
we also point that, due to the impact of competing flows,  careful separation could be necessary if we want to guarantee the miss ratios of individual flows for certain QoS requirements. 

The current practice uses applications and domains to separate flows of requests into different cache spaces~\cite{JiangSong,memcachNSDI}.  One possible explanation is that the data item sizes, e.g., text and image objects, and request rates are quite different.  This also suggests that there is still room for possible improvement.
A more careful strategy based on the quantitative characterization may lead to optimal or near-optimal performance. 
For example, our analysis shows that distributions are important in determining the miss ratios. Thus, it appears to be beneficial if the statistics of different flows can be further exploited in practice.

\section{Experiments}\label{s:experiment}

We implement an LRU simulator using C++ and conduct extensive simulation experiments.
First, we verify Theorem~\ref{theorem:missP} for data items of varying sizes with distributions beyond Zipf's distributions. 
Next, we study the interactions among multiple competing flows on the same cache space. 
Last,  we investigate the impact of overlapped data items across flows by verifying Corollaries~\ref{corollary:4} and \ref{corollary:5}.   
The numerical results based on analyses match accurately with the simulation experiments, even for small cache sizes and finitely many data items.

\begin{experiment} \label{exp:1}
Consider $3$ flows sharing a server with $10$ different item sizes. 
Set $[\nu_1,\nu_2, \nu_3]= [0.2, 0.3, 0.5]$.
Based on the empirical data size distribution~\cite{JiangSong}, we assume that the 
data sizes $\{s_i\}$ are i.i.d. random variables draw 
from a multinomial distribution with $s_i \in \{1,2,\ldots,10\}$ and parameters $[0.2, 0.15, 0.1, 0.1, 0.08, 0.09, 0.06, 0.06, 0.04, 0.02]$.
Let $d_{i,j}^{(k)}$ denote the $i'$th data item with size $j$ of flow~$k$, $1\leq k \leq 3$, $1\leq j \leq10$, $1 \leq i \leq N=10^6$.
Let $q_{i,j}^{(k)}\eqdef \Pr[R_0=d_{i,j}^{(1)} | I_0 = 1]$. Set $q_{1,j}^{(1)}=0.1, q_{1,j}^{(2)}=0.15, q_{1,j}^{(3)}=0.2$ for $1 \leq j\leq 10$,
 and $q_{i,j}^{(1)}= c_1 \log i/i^{\alpha_1}, q_{i,j}^{(2)}= c_2 \log i /i^{\alpha_2}, q_{i,j}^{(3)} = c_3 \log i/i^{\alpha_3}$ for $2 \leq i \leq N$ and $1 \leq j \leq10$. 
Set $\alpha_1=2.2$, $\alpha_2=2.4 $, $\alpha_3=2.6$. 
Then, $c_1=(1-q_{1,j}^{(1)})/(\sum_{i=2}^N (\log i) i^{-\alpha_1})=1.4193$, $c_2=(1-q_{1,j}^{(2)})/(\sum_{i=2}^N (\log i) i^{-\alpha_2})=1.8804$, $c_3=(1-q_{1,j}^{(3)})/(\sum_{i=2}^N (\log i)i^{-\alpha_3})=2.3910$.
\begin{figure}[h]
\vspace{-0.3cm}
\centering
\includegraphics[width=7.2cm]{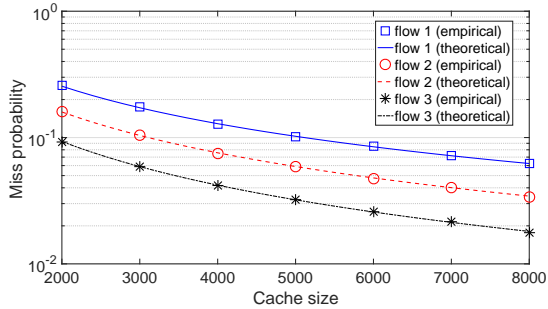}\vspace{-0.3cm}
\caption{Flows beyond Zipf with varying sizes}
\label{fig:exp1}
\vspace{-0.2cm}
\end{figure}
Theoretical miss probabilities are approximated by Theorem~\ref{theorem:missP}.  In order to improve the accuracy for a small cache size $x$, 
$m^\leftarrow(x)$ is evaluated by a numerical method based on $m(x)$. 
The empirical miss probabilities and their theoretical approximations are plotted in Fig.~\ref{fig:exp1}, which
 match very well and validate Theorem~\ref{theorem:missP} when data items have different sizes and their distributions are beyond a Zipf's distribution ($\approx \log i / i^{\alpha_k}$).
\end{experiment}

\begin{experiment}\label{exp:2}
This experiment compares the miss ratios when a flow is served exclusively in a dedicated cache 
and when it shares the same cache with other flows.  
We show how one flow is impacted by other competing flows, through validating Corollary~\ref{corollary:3}. 
Consider $10$ flows without overlapped data items. Let $\nu_k=0.1$ and
$d_i^{(k)}, i=1,2, 3, \cdots$ be the data items of flow~$k$ for $1 \leq k \leq 10$.
Data popularities of each flow are assumed to follow a Zipf's law, i.e. $\Pr[R_0=d_i^{(k)} | I_0=k] \sim c_k/i^{\alpha_k}, 1 \leq k \leq 10$. 
Let $N_k$ be the number of data items of flow~$k$. 
Set $\alpha_i=2.5, 1\leq i \leq 5$, $\alpha_j=1.5, 6\leq j \leq 10$, and $N_k=10^6, 1 \leq k \leq 10$, and
therefore, $c_i=(\sum_{x=1}^{N_1} x^{-\alpha_1})^{-1}=0.7454$, $c_j=(\sum_{x=1}^{N_6} x^{-\alpha_6})^{-1}=0.3831$. 
Note that we use the enhanced approximation~(\ref{eq:minverseaprxC3}), instead of~(\ref{eq:minverseC3}),  to compute $m^\leftarrow(x)$ when the cache size~$x$ is relatively small.
The theoretical and empirical results for the miss probabilities are plotted in Fig.~\ref{fig:exp2} when changing the cache size from 200 to 2000. 
Since flows~$1-5$ (respectively flows~$6-10$) have the same popularity distribution and the same miss ratio, we only plot flow~$1$ (respectively flow~$6$). 
\begin{figure}[h]
\vspace{-0.3cm}
\centering
\includegraphics[width=7.2cm]{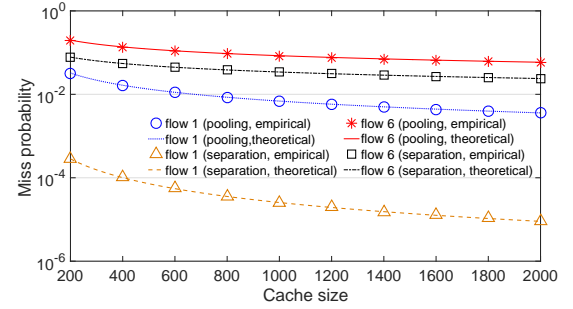}\vspace{-0.3cm}
\caption{Impacts among 10 flows}
\label{fig:exp2}
\vspace{-0.2cm}
\end{figure}
It can be observed that the empirical results match with the numerical results even when the cache size is relatively small.
In this case, flow~1 has a miss probability tail that decays on the order of $1/x^{0.9}$, as shown by the curve with a label flow 1 (pooling, empirical).  
However, if flow~1 is served without others, as shown by the curve with a label flow~1 (separation, empirical),  its probability tail only decays on the order of $1/x^{1.5}$. 
Therefore, in this case it is much worse for flow~1 (respectively flows $2-5$) to share with others than to be served exclusively. 
On the other hand,  flow~6 (respectively flows~$7-10$) is not significantly impacted when served together with other flows, since $\alpha_6<\alpha_i, 1\leq i \leq 5$.

\end{experiment}

\begin{experiment}\label{exp:3}
To address overlapped common data items, 
we simulate $3$ classes $A, B, D$ defined in~Section~\ref{ss:overlap}, 
and use the same notation introduced therein. 
Let $N_A, N_B, N_D$ be the numbers of data items of class $A, B$  and $D$, respectively.
\begin{figure}[h] 
\vspace{-0.2cm}
 \centering
 \subfigure[Without overlapped data (Corollary~\ref{corollary:4})]{
 \begin{minipage}{8.2cm}
 \centering
 \includegraphics[width=8.2cm]{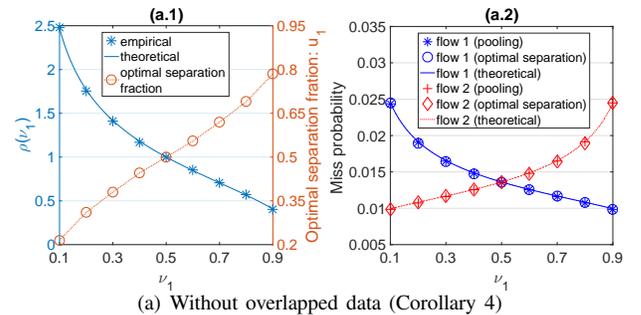}\vspace{-0.0cm}
 \end{minipage}
 }
 \vspace{-0cm}
  \subfigure[With overlapped data (Corollary~\ref{corollary:5})]{
 \begin{minipage}{6.9cm}
 \centering
  \vspace{-0.2cm}
  \includegraphics[width=6.9cm]{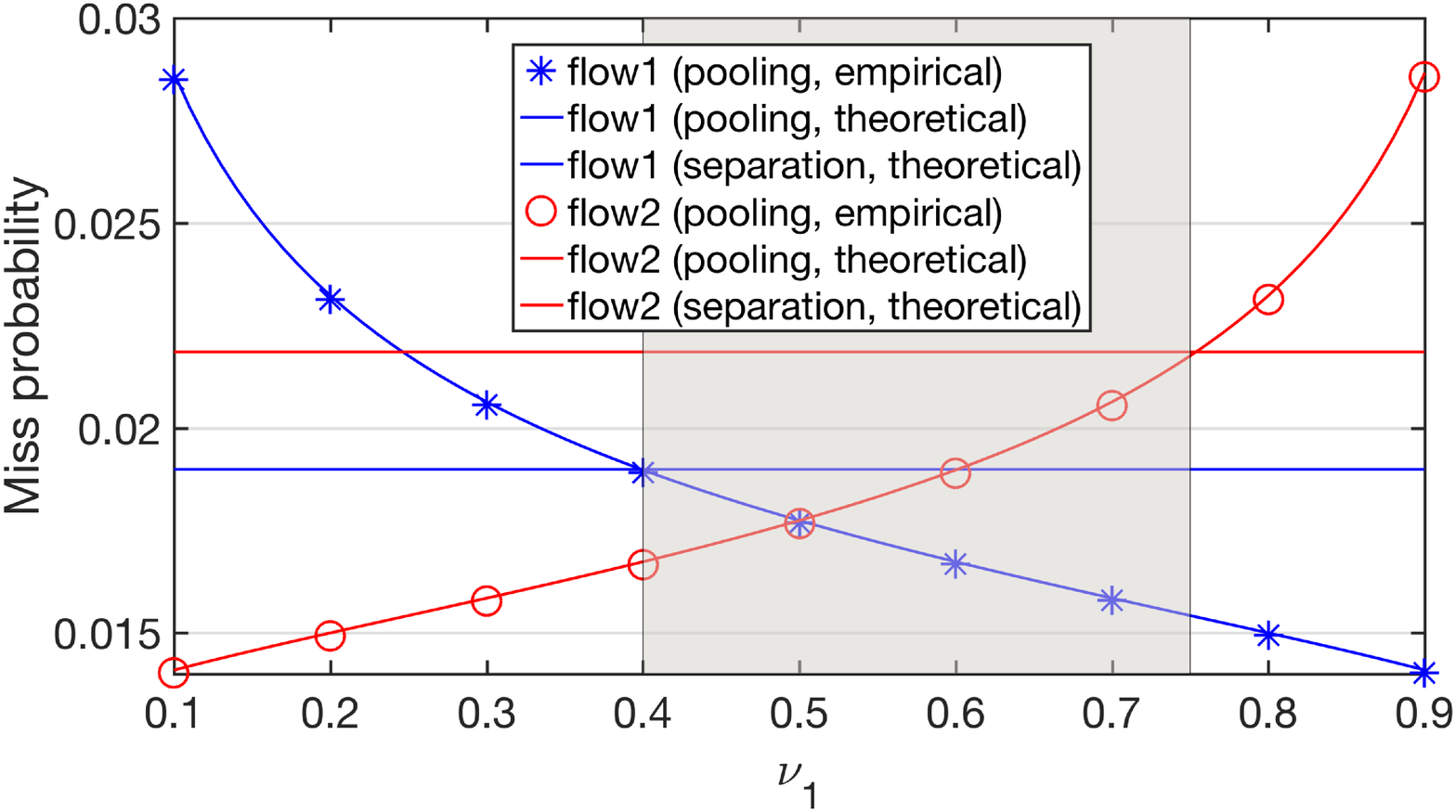}\vspace{-0.0cm}
 \end{minipage}
}
\vspace{-0.15cm}
\caption{Two flows sharing a server}
\label{fig:exp3}
\vspace{-0.25cm}
\end{figure}
Set cache size $x=1000$, $N_A=N_B=N_D=10^6$, $\alpha_A=\alpha_B=\alpha_D=1.7$, $c_A=c_B=c_D=\left(\sum_{i=1}^{N_A} i^{-\alpha_A}\right)^{-1}=0.4868$.
We conduct two experiments to study two flows with (respectively without) common data items by setting $p_D^{(1)}=p_D^{(2)}=0.2$ (respectively $p_D^{(1)}=p_D^{(2)}=0$).  
In Fig.~\ref{fig:exp3}(a), we plot $\rho(\nu_1) \eqdef \Pr_s^*[C_0>u_1x| I_0=1]/\Pr_s^*[C_0>u_2x| I_0=2]$ and the miss ratios for both flows under resource pooling and the optimal separation to validate Corollary~\ref{corollary:4}.
The simulations match with the theoretical results.
In Fig.~\ref{fig:exp3}(b), we plot the miss ratios under resource pooling and under a static separation $(u_1, u_2)=(0.55,0.45)$.
When flow~1 and flow~2 have common data and $\nu_1 \in (0.4, 0.75)$ (the shaded area in Fig.~\ref{fig:exp3}~(b)), 
both flows have lower miss ratios under resource pooling than under the static separation.
This result validates Corollary~\ref{corollary:5}. In presence of overlapped data, there exists a good region where both flows have better hit ratios by pooling.
However, when the arrival rates of these two flows are very different, i.e., $\nu_1<0.4$ or $\nu_1>0.75$,  the flow with a lower arrival rate will be negatively impacted.  

\end{experiment}


 \section{Conclusion}\vspace{-0.0cm}
 When designing a caching system shared by multiple request flows, should we use resource pooling or separation for better hit ratios?  This paper develops a theoretical framework to answer this fundamental question.  Roughly speaking, for flows with similar request distributions and data item sizes, with close arrival rates,  and/or with enough overlapped data items, 
 it is beneficial to jointly serve them by combining their allocated cache spaces together.  
However,  for flows with {disparate request distributions}, i.e., 
probability tails decaying at different rates,  or with clearly different arrival rates, isolating the cache spaces provides a guarantee for the hit ratios of individual flows. 
Otherwise, some of the flows could be negatively impacted, even severely penalized.  Our results provide useful insights that can be exploited to potentially further improve the hit ratios of caching systems.

\section{Proofs}\label{s:proof}
This section contains the details of the proofs.  
\subsection{Proof of Lemma~\ref{lemma:decomposition}}
Without loss of generality, we can assume that $\left(q^{(k)}_{i}\right)$ is a non-increasing sequence in $i$ for each fixed $k$. 
We begin with $g(x)=\bar{m}(x)$. 
First, using the inequality 
$(1-q^{(k)}_i)^z \leq \exp(-q^{(k)}_i z)$,
we obtain
\begin{align}\label{eq:decomp1}
 m^{(k)}(z) 
 &\geq \sum_{i=0}^{\infty} s^{(k)}_i\left( 1- \exp\left(-q^{(k)}_i z\right) \right) = \bar{m}^{(k)}(z).
\end{align}
Next, for any $0<\delta<1$,  there exists $x_{\delta}>0$ such that $1-x\geq e^{-(1+\delta)x}, 0\leq x \leq x_{\delta}$.
Thus, selecting $i_{\delta}$ with $q^{(k)}_{i_{\delta}}<x_{\delta}$, we have
\begin{align}\label{eq:decomp2}
 m^{(k)}(z) &\leq  \left( \sum_{i=0}^{i_{\delta}} + \sum_{i>i_{\delta}} \right) s^{(k)}_i\left( 1- \left(1-q^{(k)}_i\right)^z \right) \nonumber\\
 &\leq i_{\delta}\bar{s} + \sum_{i>i_{\delta}} s^{(k)}_i\left( 1- \exp\left(-(1+\delta)q^{(k)}_i z\right) \right) \nonumber\\
 &\leq i_{\delta} \bar{s}+ \bar{m}^{(k)}((1+\delta)z),
\end{align}
where the second last inequality uses $s^{(k)}_i \leq \bar{s}$.
Using (\ref{eq:decomp1}), (\ref{eq:decomp2}) and 
$\varlimsup_{x\to \infty}m^{(k)}((1+\delta) x)/m^{(k)}(x)$ $\to 1$ as $\delta\to 0$, 
we prove (\ref{eq:approx}).

Based on the representation of $\left(p^{\circ}_i, i\geq 1\right)$ in~(\ref{eq:unordered}), we have
\begin{align}\label{eq:decomp3}
m(z) &= \sum_{i=0}^{\infty} s_i\left( 1- (1-p^{\circ}_i )^z \right) \nonumber\\
 & = \sum_{k=1}^{\infty} \sum_{i=1}^{\infty}  s^{(k)}_i\left( 1- \left(1-\nu_k q^{(k)}_i\right)^z \right).
   \end{align}
 Using the same arguments as in (\ref{eq:decomp1}) and (\ref{eq:decomp2}), we can prove   
   \begin{align}\label{eq:decomp4}
  \sum_{i=1}^{\infty}  &s^{(k)}_i\left( 1- \left(1-\nu_k q^{(k)}_i\right)^z \right) \nonumber\\
  & \sim \sum_{i=0}^{\infty} s^{(k)}_i\left( 1- \exp\left(-q^{(k)}_i \nu_k z\right) \right) = \bar{m}^{(k)}(\nu_k z). 
   \end{align}
   Using (\ref{eq:decomp3}), (\ref{eq:decomp4}) and applying~(\ref{eq:approx}), we finish the proof of~(\ref{eq:decomp}) when $g(x)=\bar{m}(x)$. 
  Slightly modifying the preceding arguments can prove (\ref{eq:decomp}) when $g(x)=m(x)$.

\subsection{Proof of Theorem~\ref{theorem:missP}} \label{ss:p:theorem:missP}
In order to prove Theorem~\ref{theorem:missP}, we need to establish a lemma. 
\begin{lemma}\label{lemma:help1}
For $\epsilon(x) = \epsilon \delta(x)$ as in~(\ref{eq:mz}) and $\bar{s}=\sup_i s_i < \infty$, we obtain
\begin{align}\label{eq:Mn}
  \Pr\left[ M( m^{\leftarrow}(x)) \geq (1+\epsilon(x)) x \right] \leq e^{-(\epsilon(x))^2 x/4\bar{s}}.
\end{align}
\end{lemma}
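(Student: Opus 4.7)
The plan is to treat $M(n)$ as a weighted sum of occupancy indicators and apply a multiplicative Chernoff bound. Writing $B_i \triangleq \mathbf{1}\{d_i \in \{R_{-1},\ldots,R_{-n}\}\}$, we have $M(n) = \sum_{i\geq 1} s_i B_i$ with $E[B_i] = 1-(1-p_i^\circ)^n$, so $E[M(n)] = m(n)$ and, in particular, $E[M(m^{\leftarrow}(x))] = x$. The target is therefore a relative-deviation upper-tail bound on $M(n)$ at level $\epsilon(x)$. A naive bounded-differences (McDiarmid) bound, viewing $M$ as a function of the $n$ i.i.d.\ requests with per-coordinate Lipschitz constant $\bar s$, only produces $\exp(-c\epsilon(x)^2 x^2/(m^{\leftarrow}(x)\bar s^2))$, which is insufficient whenever $m^{\leftarrow}(x) \gg x/\bar s$. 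I would therefore switch to a Chernoff-type argument, whose key input is a factorized bound on the moment generating function.

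The structural fact I would use is that the occupancy indicators $\{B_i\}$ arising from $n$ i.i.d.\ samples are negatively associated (the classical balls-into-bins negative association; Joag-Dev and Proschan, 1983). Consequently, for every $\lambda\geq 0$,
\[
E\bigl[e^{\lambda M(n)}\bigr] \;\leq\; \prod_i E\bigl[e^{\lambda s_i B_i}\bigr] \;\leq\; \exp\!\Bigl(\sum_i E[B_i]\bigl(e^{\lambda s_i}-1\bigr)\Bigr),
\]
where the second step uses $1+u\leq e^u$. Since $s_i\in[0,\bar s]$ and $s\mapsto e^{\lambda s}-1$ is convex with value $0$ at $s=0$, we have $e^{\lambda s_i}-1\leq(s_i/\bar s)(e^{\lambda\bar s}-1)$, which collapses the bound to $E\bigl[e^{\lambda M(n)}\bigr]\leq\exp\bigl(\tfrac{m(n)}{\bar s}(e^{\lambda\bar s}-1)\bigr)$.

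Setting $n=m^{\leftarrow}(x)$ so $m(n)=x$ and applying Markov's inequality at the level $t=(1+\epsilon(x))x$, the optimum $\lambda^{\ast}=\bar s^{-1}\log(1+\epsilon(x))$ yields the Cram\'er rate
\[
\Pr\bigl[M(m^{\leftarrow}(x))\geq (1+\epsilon(x))x\bigr] \;\leq\; \exp\!\left(-\frac{x}{\bar s}\bigl[(1+\epsilon(x))\log(1+\epsilon(x))-\epsilon(x)\bigr]\right).
\]
The proof concludes with the elementary estimate $(1+u)\log(1+u)-u\geq u^2/4$ for $0\leq u\leq 1$ (a short second-derivative check: the difference has zero value and zero derivative at $u=0$ and nonnegative second derivative on $[0,1]$). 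Applying this with $u=\epsilon(x)\in(0,\epsilon]$, and assuming $\epsilon\leq 1$ (permissible since (\ref{eq:mz}) is invoked in the $\epsilon\to 0$ regime), delivers the stated bound $\exp(-\epsilon(x)^2 x/(4\bar s))$. The main obstacle is obtaining the product MGF bound without genuine independence of the $B_i$'s: I would either invoke the negative association of occupancy indicators, or, to avoid that citation, proceed by conditioning on the requests one at a time and applying the same convex linearization inside each conditional expectation. Minor technicalities---non-integrality of $m^{\leftarrow}(x)$, and the possibility that $m(m^{\leftarrow}(x))$ slightly exceeds $x$---are harmless because $M(\cdot)$ is monotone in $n$ and $m$ is continuous and non-decreasing.
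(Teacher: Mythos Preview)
Your proof is correct and follows the same Chernoff-bound strategy as the paper: both write $M(n)=\sum_i s_i X_i$ with $X_i$ the occupancy indicator of item $d_i$, apply Markov's inequality to $e^{\theta M(n)}$, bound the moment generating function by the product of marginals, and then linearize. Two differences are worth noting. First, the paper simply asserts ``independence of $X_i$'s'' to factorize the MGF; this is not literally true for occupancy indicators arising from $n$ i.i.d.\ draws, and your invocation of negative association (or the conditioning alternative you suggest) is the correct justification for the product-of-marginals inequality when $\theta>0$. Second, instead of optimizing the tilt and invoking the rate-function bound $(1+u)\log(1+u)-u\geq u^2/4$, the paper fixes $\theta=\epsilon(m(n))/(2\bar s)$ and uses the elementary estimate $e^x-1\leq(1+\xi)x$ for $0<x<2\xi/e^{\xi}$ with $\xi=\epsilon(m(n))/2$, which after one line of algebra produces the same exponent $-\epsilon(x)^2 x/(4\bar s)$. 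Your route is slightly cleaner and does not rely on the implicit smallness of $\epsilon(x)$ needed for the paper's linearization step; the paper's route avoids the optimization and the rate-function inequality at the cost of that mild extra assumption.
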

\begin{proof}
Define a Bernoulli random variable $X_i$, and let $X_i=1$ to indicate that item $d_i$ has been requested in $R_{-1}, R_{-2}, \cdots, R_{-n}$ and $X_i=0$ otherwise. 
By Markov's inequality, for $\theta>0$, we obtain, using $\Pr[X_i=1] = p_i(n)$, 
$\expect[e^{\theta s_i X_i}]=p_i(n) e^{\theta s_i} + 1-p_i(n) = p_i(n) \left(e^{\theta s_i}-1\right) + 1 \leq e^{p_i(n) \left(e^{\theta s_i}-1\right)}$ and independence of $X_i$'s, 
{\small
\begin{align}
& \Pr \left[ M(n)\geq (1+\epsilon(m(n)))m(n) \right] \nonumber\\
&\leq \expect \left[ e^{\theta \sum_{i=1}^{\infty}s_i X_i} \right] \mathlarger{/} e^{(1+\epsilon(m(n))) \theta m(n)} \nonumber\\
& \leq  \exp \left( \sum_{i=1}^{\infty}  p_i(n) \left( e^{\theta s_i}  - 1\right) - \theta (1+\epsilon(m(n))) \sum_{i=1}^{\infty}p_i(n) s_i \right).  \nonumber
\end{align}
}
\\
\noindent Using $e^x-1\leq (1+\xi)x, 0<x<2\xi/e^{\xi}, \xi>0$, for $\theta=\epsilon(m(n))/(2\bar{s})$, we obtain,  
  $e^{\theta s_i}  - 1 \leq  (1+\epsilon(m(n)) / 2)  \theta s_i$. 
Therefore, we have
\begin{align}
\Pr  \left[ M(n)\right. & \left. \geq (1+\epsilon(m(n)) )m(n) \right] \nonumber\\
&\leq \exp \left( - \sum_{i=1}^{\infty}  \frac{(\epsilon (m(n)) ^2}{4 \bar{s}}  p_i(n) s_i \right), \nonumber
\end{align}
which, by $\sum_{i=1}^{\infty} p_i(n)s_i = m(n)$, yields
$$ \Pr\left[ M(n) \geq (1+\epsilon(m(n))) m(n) \right] \leq e^{-(\epsilon(m(n)))^2 m(n)/4\bar{s}},$$
implying (\ref{eq:Mn}) by replacing $x=m(n)$.
Using the same approach, we can prove
\begin{align}\label{eq:Mnlow}
 \Pr\left[ M( m^{\leftarrow}(x)) \leq (1-\epsilon(x)) x \right] \leq e^{-(\epsilon(x))^2 x/4\bar{s}}.
\end{align}
\end{proof}

Next we prove Theorem~\ref{theorem:missP}.
\begin{proof}
In the intuitive proof of Section~\ref{s:model}, we have derived
\begin{align}\label{eq:representation2}
  \Pr[C_0>x | I_0=k] = \Pr[\sigma >  M^{\leftarrow}(x) | I_0=k]. 
\end{align}
The whole proof consists of two steps.  
The first step is to show
\begin{align}\label{eq:firstStep}
\Pr[\sigma >  n | I_0=k] \sim \Gamma(\beta_k+1)/\Phi_k(n),
\end{align}
for $\beta_k>0$ and $\beta_k=0$,  respectively.
 The second step 
is to relate $M^{\leftarrow}(x)$ to $m^{\leftarrow}(x)$ as $x \to \infty$. 
\vspace{3mm}

\noindent \textbf{Step 1.}
First, we prove (\ref{eq:firstStep}) for $\beta_k>0$.
Assume that $\Phi_k(x)$ is eventually absolutely continuous and strictly
monotone,  since, by Proposition~1.5.8 of~\cite{regularVariation}, we can construct such 
a function
\begin{equation}\label{eq:modified}
\Phi^{\ast}_k(x)= \beta_k \int_{x_0}^{x}\Phi_k(s)s^{-1}ds, \; x\geq x_0,
\end{equation}
which, for $x_0$ large enough,  satisfies, as $y \to \infty$,
$$  \left(\sum_{i=y}^{\infty} q^{(k)}_i\right)^{-1}  \sim \Phi_k \left( \left(p^{(k)}_y\right)^{-1} \right) \thicksim \Phi_k^{\ast} \left( \left(p^{(k)}_y\right)^{-1} \right).$$
Therefore,  there exists $x_{0}$
such that for all $x>x_0$, $\Phi_k(x)$ has an inverse function
$\Phi_k^{\leftarrow}(x)$.  The condition (\ref{eq:relation}) implies
that, for $0<\epsilon_1<1$,  there exists $i_{\epsilon_1}$, such that
for $i>i_{\epsilon_1}$,
{
\begin{align}\label{phiineq} 
(1-\epsilon_1)   \left(\sum_{j=i}^{\infty} q^{(k)}_j\right)^{-1} & \leq \Phi_k \left( \left(p^{(k)}_i\right)^{-1} \right) \nonumber\\
&\leq (1+\epsilon_1)  \left(\sum_{j=i}^{\infty} q^{(k)}_j\right)^{-1}, 
\end{align}
}
and thus, by choosing $i_{\epsilon_1}$ such that $1/p^{(k)}_{i_{\epsilon_1}} > x_0$, we obtain
\begin{align}\label{eq:equivBound2}
\Phi^{\leftarrow}_k   & \left( (1-\epsilon_1) \left( \sum_{j=i}^{\infty} q^{(k)}_j \right)^{-1} \right)  \leq
\left(p^{(k)}_i\right)^{-1} \nonumber \\
&\leq  \Phi_k^{\leftarrow} \left((1+\epsilon_1) \left( \sum_{j=i}^{\infty} q^{(k)}_j \right)^{-1} \right).
\end{align}

First, we will prove an \emph{upper bound} for (\ref{eq:firstStep}).
Combining (\ref{eq:rep_in}) and  (\ref{eq:equivBound2}) yields,  using $(1-p)^n\leq e^{-np}$, 
 \begin{align}\label{eq:repUpper}
  \Pr &\left[\sigma>n \mathlarger{\mid} I_0=k\right] 
   = \left( \sum_{i=1}^{i_{\epsilon_1}} 
  + \sum_{i=i_{\epsilon_1}+1}^{\infty} \right) q_i^{(k)}\left(1-p_i^{(k)} \right)^n  \nonumber\\
  & \leq \left(1-p_{i_{\epsilon_1}}^{(k)} \right)^n +  \sum_{i=i_{\epsilon_1}+1}^{\infty} q_i^{(k)}e^{ -n p_i^{(k)}} \eqdef I_1+I_2.
\end{align}
For $0< \epsilon_2 \leq p_{i_{\epsilon_1}}^{(k)}$, integer $n$ large enough, and any nonnegative integer $l \leq \lfloor \log n\epsilon_2 \rfloor$, 
we can find $i_l$ such that $p_{i_l+1}^{(k)} \leq e^l/n \leq p_{i_l}^{(k)} \leq \epsilon_2$. 
Choose an integer $m$ with $0<m <\lfloor \log n\epsilon_2 \rfloor$.  We have $i_0 > i_m > i_{\lfloor \log n\epsilon_2 \rfloor} > i_{\epsilon_1}$, and
\begin{align}\label{eq:I2}
I_2 & = \left( \sum_{i=i_{\epsilon_1}+1}^{i_{\lfloor \log n\epsilon_2 \rfloor}-1}  + \sum_{i={i_{\lfloor \log n\epsilon_2 \rfloor}}}^{i_m} 
 +  \sum_{i=i_m+1}^\infty  \right) q_i^{(k)} e^{-np_i^{(k)}} \nonumber \\
 & \leq e^{-n\epsilon_2} + \sum_{l=m}^{\infty} e^{-e^l} \sum_{j=i_{l+1}+1}^{i_l} q_j^{(k)} + \sum_{j=i_m+1}^\infty q_{j}^{(k)}e^{-np_j^{(k)}} \nonumber \\
  &\eqdef I_{21}+I_{22}+I_{23}.
\end{align}
We have $I_{23}  =  \sum_{j=i_m+1}^\infty \left( Q_{j} - Q_{j+1}\right)$ $e^{-n/\Phi_k^\leftarrow \left((1+\epsilon_1)Q_j^{-1}\right)}$ for $Q_i = \sum_{j=i}^\infty q_j^{(k)}$.
Since $e^{-n/\Phi_k^\leftarrow \left((1+\epsilon_1)u^{-1}\right)}$ is decreasing with $u$, 
we obtain $e^{-n/\Phi_k^\leftarrow \left((1+\epsilon_1)u^{-1}\right)} \geq e^{-n/\Phi_k^\leftarrow \left((1+\epsilon_1)Q_j^{-1}\right)}$ for $\forall u \in (Q_{j+1} , Q_{j})$,
which implies
\begin{align}
 I_{23} 
  & \leq \int_0^{Q_{i_m}} e^{-n/\Phi_k^\leftarrow \left((1+\epsilon_1)u^{-1}\right)} du \nonumber \\
  & \leq    \int_{0}^{\epsilon_1} d\left( \frac{1+\epsilon_1}{\Phi_k \left( n/z \right)  } \right)  +  \int_{\epsilon_1}^{e^m} e^{-z} d\left( \frac{1+\epsilon_1}{\Phi_k \left( n/z \right)  } \right). \nonumber 
  \end{align}
By Theorem 1.2.1 of \cite{regularVariation} and (\ref{eq:modified}), 
we obtain,
\begin{equation}\label{eq:propupper5I1}
  I_{23} \Phi_k(n) \lesssim    (1+\epsilon_1) \epsilon_1 ^ {\beta_k} + \int_{\epsilon_1}^{e^m}(1+\epsilon_1)\beta_k e^{-z} z^{\beta_k-1}dz  .
\end{equation}
For $I_{22}$, using the same approach, we obtain
\begin{equation}\label{eq:propupper6} 
I_{22} \Phi_k(n) \lesssim  \sum_{k=m}^{\infty} (1+\epsilon_1)e^{-e^{k}} \left(e^{k+1} \right)^{\beta_k}<\infty.
\end{equation}
Combining (\ref{eq:propupper5I1}) and (\ref{eq:propupper6}), and then
 passing $\epsilon_1 \to 0$ and $m \to \infty$,  we obtain, using $I_1=o(1/\Phi_k(n))$ in (\ref{eq:repUpper}),
  \begin{align}\label{eq:asympN5}
   \Pr[\sigma>n | I_0=k]\Phi_k(n) &\lesssim \int_{0}^{\infty}\beta_k e^{-z}z^{\beta_k-1}dz \nonumber \\
&=\Gamma(\beta_k+1).
\end{align}

Now, we prove the \emph{lower bound} for  (\ref{eq:firstStep}).
By condition (\ref{eq:1class}) we can choose $i_{\epsilon_1}$ large enough such that, for all $i>i_{\epsilon_1}$,
\begin{align}\label{eq:class1upper}
 q_i^{(k)} \geq (1-\epsilon_1)q_{i-1}^{(k)} .
\end{align}
Using~(\ref{eq:rep_in}), (\ref{eq:equivBound2}) and the monotonicity of $\Phi_k^{\leftarrow}(\cdot)$, we obtain 
{\small
 \begin{align}\label{eq:low1}
  &\Pr\left[\sigma>n \mathlarger{\mid} I_0=k\right]  \geq \sum_{i=i_{\epsilon_1}+1}^{\infty} q_i^{(k)}\left(1-p_i^{(k)} \right)^n \nonumber\\
  & \geq   (1-\epsilon_1)  \sum_{i=i_{\epsilon_1}+1}^{\infty} \Delta Q_i \left(1-\frac{1}{ \Phi^{\leftarrow}_k  \left( (1-\epsilon_1) \left(Q_i \right)^{-1} \right)} \right)^n \nonumber \\
  & \geq   (1-\epsilon_1)  \int_{0}^{Q_{i_{\epsilon_1}}} \left(1-1 \mathlarger{\mathlarger{/}} \Phi^{\leftarrow}_k  \left( (1-\epsilon_1) u^{-1} \right) \right)^ndu.
  \end{align}
}
\\
 \noindent where $\Delta Q_i = Q_{i-1}-Q_i=q_{i-1}^{(k)}$. For $W>0$, choosing $i_{n}>i_{\epsilon}$ with $ \Phi^{\leftarrow}_k\left((1-\epsilon_1)Q_{i_n}\right) = n/W$ 
 and letting $z=n/\Phi^{\leftarrow}_k\left( (1-\epsilon_1) u^{-1} \right)$,  we obtain, 
\begin{align}\label{eq:asympN6}
   \Pr&[ \sigma>n | I_0=k]\Phi_k(n) \nonumber\\
   & \geq  (1-\epsilon_1)\Phi_k(n)\int_{0}^{Q_{i_n}} \left( 1- \frac{1}{\Phi^{\leftarrow}_k\left( (1-\epsilon_1)/u \right)} \right)^ndu \nonumber\\
        & \geq  (1-\epsilon_1) \int_{\epsilon_1}^{W}  \left(1- \frac{z}{n}\right)^n d \left( \frac{1-\epsilon_1}{\Phi_k(n/z)} \right). 
\end{align}
From (\ref{eq:asympN6}), by using the same approach as in deriving
(\ref{eq:propupper5I1}), we obtain, as $n \to \infty$,
\begin{equation}
\Pr[ \sigma>n | I_0=k] \Phi_k(n) \gtrsim  (1-\epsilon_1)\int_{\epsilon}^{W}(1-\epsilon_1)\beta_k e^{-z}
z^{\beta_k-1}dz, \nonumber
\end{equation}
which, passing  $W \to \infty$ and $\epsilon_1 \to 0$,  yields
  \begin{align}\label{eq:asympN7}
   \Pr[ \sigma>n | I_0=k] \Phi_k(n) & \gtrsim   \int_{0}^{\infty}\beta_k e^{-z} z^{\beta_k-1}dz \nonumber \\
& = \Gamma(\beta_k+1).
\end{align}
Combining (\ref{eq:asympN5}) and (\ref{eq:asympN7}) completes the
proof of (\ref{eq:firstStep}) for $\beta_k>0$.

For $\beta_k=0$, we need to prove
\begin{align}\label{sgmeq1}
 \Pr \left[ \sigma > n \mathlarger{\mid} I_0=k\right] \sim 1/\Phi_k(n). 
\end{align}
Recall that there exists $x_0$ such that $\Phi_k(x)$ is strictly increasing for $x>x_0$.
For any positive integer $n$, we can find $\epsilon_3 \in (0,1)$ with $n/\epsilon_3>x_0$. 
Because of the monotonicity of $p_i^{(k)}$, there exists an index $i_{\epsilon_3}$  such that $p_{i_{\epsilon_3}}^{(k)} \geq \epsilon_3/n$ and $p_i^{(k)} < \epsilon_3/n$ for all $i>i_{\epsilon_3}$.
By choosing $\epsilon_3$ sufficiently small such that $i_{\epsilon_3}>i_{\epsilon_1}$, we can derive the lower bound for the probability
\begin{align}
\Pr \left[ \sigma > n \mathlarger{\mid} I_0=k\right] & = \sum_{i=1}^{\infty} q_i^{(k)}\left(1-p_i^{(k)} \right)^n \nonumber\\
& \geq \left(1-\frac{\epsilon_3}{n} \right)^n \sum_{i=i_{\epsilon_3}+1}^{\infty} q_i^{(k)} . \nonumber 
\end{align}
Using (\ref{phiineq}) and (\ref{eq:class1upper}), we obtain
\begin{align}
 \Pr \left[ \sigma > n \mathlarger{\mid} I_0=k\right] & \geq \left(1-\frac{\epsilon_3}{n} \right)^n (1-\epsilon_1)\sum_{i=i_{\epsilon_3}}^{\infty} q_i^{(k)}  \nonumber\\
 &   \geq \left(1-\frac{\epsilon_3}{n} \right)^n \frac{{(1-\epsilon_1)}^2}{\Phi_k\left( \left( p_{i_{\epsilon_3}}^{(k)}\right)^{-1} \right)} \nonumber \\
 & \geq \left(1-\frac{\epsilon_3}{n} \right)^n \frac{{(1-\epsilon_1)}^2}{\Phi_k\left( \frac{n}{\epsilon_3} \right)},   \nonumber 
\end{align} 
implying 
\begin{align*}
\Pr \left[ \sigma > n \mathlarger{\mid} I_0=k\right] \Phi_k(n) \geq \left(1-\frac{\epsilon_3}{n} \right)^n {(1-\epsilon_1)}^2 \frac{\Phi_k(n)}{\Phi_k\left( \frac{n}{\epsilon_3} \right)}.
\end{align*}
By passing $n \to \infty$ and $\epsilon_1, \epsilon_3 \to 0$, we obtain
\begin{align}
 \varliminf_{n \to \infty}  \Pr \left[ \sigma > n \mathlarger{\mid} I_0=k\right] \Phi_k(n) \geq 1 . \label{lwbd1}
\end{align}

Next, we prove the upper bound. The probability $\Pr [ \sigma > n | I_0=k]$ can be bounded as
\begin{align}
\Pr & \left[ \sigma > n \mathlarger{\mid} I_0=k\right]  = \left( \sum_{i=i_{\epsilon_1}+1}^{\infty} + \sum_{i=1}^{i_{\epsilon_1}} \right) q_i^{(k)}\left(1-p_i^{(k)} \right)^n  \nonumber \\
& \hspace{0.3cm} \leq  \sum_{i=i_{\epsilon_1}+1}^{\infty} q_i^{(k)} e^{-np_i^{(k)}}+ \left(1-p_{i_{\epsilon_1}}^{(k)} \right)^n, \label{upbdineq}
\end{align}
where the second inequality uses the monotonicity of $p_i^{(k)}$ and $1-x \leq e^{-x}$ .
Using a similar approach as in (\ref{eq:I2}), we can upper bound (\ref{upbdineq}) by
{\small
\begin{align}
&\left ( \sum_{i=i_{\epsilon_1}+1}^{i_{\lfloor \log n\epsilon_2 \rfloor}-1} 
  + \sum_{i={i_{\lfloor \log n\epsilon_2 \rfloor}}}^{i_m}  +  \sum_{i=i_m+1}^\infty \right) q_i^{(k)} e^{-np_i^{(k)}}  + \left(1-p_{i_{\epsilon_1}}^{(k)} \right)^n \nonumber \\
 & \leq e^{-n\epsilon_2} + \sum_{l=m}^{\infty} e^{-e^l} \sum_{j=i_{l+1}+1}^{i_l} q_j^{(k)} + \sum_{i=i_m+1}^\infty q_i^{(k)} + \left(1-p_{i_{\epsilon_1}}^{(k)} \right)^n \nonumber \\
 & \leq e^{-n\epsilon_2} + \sum _{l=m}^{\infty} e^{-e^l} \frac{(1+\epsilon_1)}{\Phi_k (n/e^{l+1})} + \frac{1+\epsilon_1}{\Phi_k (n/e^m)} + \left(1-p_{i_{\epsilon_1}}^{(k)} \right)^n, \nonumber 
\end{align}
}
\hspace{-0.2cm} which implies
\begin{align}
 \Pr &\left[ \sigma > n \mathlarger{\mid} I_0=k\right] \Phi_k(n) \leq  \frac{(1+\epsilon_1)\Phi_k(n)}{\Phi_k (n/e^m)} \nonumber \\
 & + \sum _{l=m}^{\infty} e^{-e^l} \frac{(1+\epsilon_1)\Phi_k(n)}{\Phi_k (n/e^{l+1})} + o(1) . \label{pphiupbd1}
\end{align}
Passing $\epsilon_1\to 0$, $n \to \infty$ and then $m \to \infty$ in (\ref{pphiupbd1}) yields
\begin{align}
 \varlimsup_{n \to \infty} \Pr \left[ \sigma > n \mathlarger{\mid} I_0=k\right] \Phi_k(n) \leq 1. \label{upbd1} 
\end{align}
Combining (\ref{lwbd1}) and (\ref{upbd1}) finishes the proof of (\ref{sgmeq1}).

Up to now, we have proved (\ref{eq:firstStep}) for $\beta_k\geq 0$.
Next, we use the concentration bounds~(\ref{eq:Mn}) and (\ref{eq:Mnlow}) for $M(x)$ in Lemma~\ref{lemma:help1} to characterize $M^{\leftarrow}(x)$. 

\noindent \textbf{Step 2.}
For  $x_1= m^{\leftarrow}(x/(1+\epsilon(x)))$, we obtain, by~(\ref{eq:Mn}),
\begin{align}\label{eq:Mx1a}
  \Pr&[M^{\leftarrow}(x) <  x_1]  \leq \Pr\left[ M( m^{\leftarrow}(x/(1+\epsilon(x)))) \geq x\right] \nonumber\\
    &\vspace{0.0cm}= \Pr\left[ M\left( m^{\leftarrow}\left(\frac{x}{1+\epsilon(x)} \right)\right) \geq  \left( \frac{(1+\epsilon(x))x}{1+\epsilon(x)}\right) \right].
  \end{align}
Recalling $h_1$ and $h_2$ defined in~(\ref{eq:delta}) and noting $\delta(x)\leq 1$,  we have, for $x>x_0$, 
$\epsilon(x) \geq h_1 \epsilon\left(x/(1+\epsilon(x)) \right)$,
which, in conjunction with~(\ref{eq:Mx1a}) and using~(\ref{eq:Mn}), implies that $\Pr[M^{\leftarrow}(x) <  x_1]$ is upper bounded by
\begin{align}\label{eq:Mx1b}
  \Pr&\left[ M\left( x_1 \right) \geq \left(1+ h_1 \epsilon\left(\frac{x}{1+\epsilon(x)} \right) \right) \left( \frac{x}{1+\epsilon(x)}\right) \right] \nonumber\\
  &\hspace{0.3cm}\leq \exp\left(-(h_1\epsilon(x/(1+\epsilon(x))))^2 x/\left(4\bar{s} (1+\epsilon(x))\right)\right) \nonumber\\
  &\hspace{0.3cm}\leq \exp\left( -\frac{h_1^2}{h_2^2} \frac{\epsilon(x)^2 x}{ 4\bar{s} (1+\epsilon)} \right).
\end{align}
Thus, by (\ref{eq:representation2}), (\ref{eq:firstStep}), (\ref{eq:Mn}) and (\ref{eq:Mx1b}), we obtain
\begin{align}
  & \Pr [C_0>x| I_0=k] \leq \Pr[\sigma >  M^{\leftarrow}(x), M^{\leftarrow}(x)\geq  x_1 | I_0=k]  \nonumber\\
  & \hspace{29mm} + \Pr[M^{\leftarrow}(x) <  x_1] \nonumber\\
  & \leq \Pr[ \sigma >  m^{\leftarrow}(x/(1+\epsilon(x))) | I_0=k] + \Pr[M^{\leftarrow}(x) <  x_1] \nonumber \\
  &\lesssim \frac{\Gamma(\beta_k+1)}{\Phi_k\left( m^{\leftarrow}(x/(1+\epsilon(x)))\right)} 
    +\exp\left( -\frac{h_1^2}{h_2^2} \frac{\epsilon(x)^2 x}{ 4\bar{s} (1+\epsilon)} \right). \nonumber
\end{align}

Using $\varlimsup_{x\to \infty} \log \left(m^{\leftarrow}(x)\right)/(\delta^2(x) x) =0$ and (\ref{eq:mz}), we obtain, recalling $\epsilon(x)=\epsilon \delta(x)$ and passing $\epsilon\to 0$,
\begin{align}\label{eq:representation3}
 \Pr [C_0 &>x | I_0=k] \nonumber \\
 & \leq \frac{\Gamma\left(1+\beta_k\right)}{\Phi_k\left(m^{\leftarrow}(x) \right)}  + o\left( 1/\Phi_k\left( m^{\leftarrow}(x) \right) \right).
\end{align}
Let $x_2=m^{\leftarrow}(x/(1-\epsilon(x)))$. We obtain
\begin{align}
  \Pr [C_0>x|I_0=k] & \geq \Pr[\sigma > M^{\leftarrow}(x), M^{\leftarrow}(x)\leq  x_2 | I_0=k]  \nonumber \\
  & - \Pr[M^{\leftarrow}(x) >  x_2], \nonumber
\end{align}
which, by similar arguments as in proving (\ref{eq:representation3}),  yields
\begin{align}\label{eq:representation4}
 \Pr [C_0 & >x|I_0=k]  \nonumber \\
 & \geq \frac{\Gamma\left(1+\beta_k\right)}{\Phi_k\left( m^{\leftarrow}(x) \right)}  - o\left( 1/\Phi_k\left( m^{\leftarrow}(x) \right) \right).
\end{align}
Combining (\ref{eq:representation3}) and (\ref{eq:representation4}) finishes the proof.
\end{proof}

\subsection{Proof of Corollary~\ref{corollary:zipf2}} \label{ss:p:corollary:zipf2}
Consider $p_x^{(k)} \sim l(x)/x^{\alpha}$ with $l(x)$ being a slowly varying function.
According to Proposition~$1.5.10$ of \cite{regularVariation}, we have 
\begin{align}
 \Pr\left[R_0>x\right] = \sum_{i\geq x}p_i^{(k)} \sim \int_{x}^{\infty} \frac{l(x)}{x^{\alpha}}dx \sim \frac{l(x)}{(\alpha-1)x^{\alpha-1}}. \label{eq:missP1P2}
\end{align}
Using Lemma~\ref{lemma:decomposition}, we obtain
\begin{align}
  m(z) & \sim \sum_{i\geq 1} \left( 1-\exp\left(-\frac{l(i)z}{i^\alpha}\right) \right) \nonumber\\
  & \sim  \int_1^\infty \left( 1-\exp\left(-\frac{l(x)z}{x^\alpha}\right) \right)dx. \nonumber
\end{align}
Since $l(x)x^{-\alpha} \sim \alpha \int_x^\infty l(t)t^{-\alpha-1}dt$ (Proposition~1.5.10 of \cite{regularVariation}), 
for any $\epsilon>0$, there exists $x_\epsilon>0$, such that for all $x>x_\epsilon$, 
\begin{align}\label{ineq:lzipf}
 (1-\epsilon) \alpha &\int_x^\infty l(t)t^{-\alpha-1} < l(x)x^{-\alpha} \nonumber\\
 &< (1+\epsilon) \alpha \int_x^\infty l(t)t^{-\alpha-1}dt. 
\end{align}
Therefore, $m(z)$ can be upper bounded by 
\begin{align}
 &m(z) \lesssim  \int_1^{x_\epsilon} \left( 1-\exp\left(-\frac{l(x)z}{x^\alpha}\right) \right)dx \nonumber\\
 & + \int_{x_\epsilon}^\infty \left( 1-\exp\left(-(1+\epsilon) \alpha z \int_x^\infty l(t)t^{-\alpha-1}dt\right)\right)dx. \nonumber
\end{align}

Define $f(x)=\alpha \int_x^\infty l(t)t^{-\alpha-1}dt$. 
We obtain
\begin{align}
 & m(z)  \lesssim x_\epsilon+ \int_{x_\epsilon}^\infty \left( 1-e^{- (1+\epsilon) z f(x)} \right)dx \nonumber \\
  & = x_\epsilon + x \left. \left(1-e^{- (1+\epsilon) z f(x)} \right) \right|_{x_\epsilon}^\infty + \int_{x_\epsilon}^\infty x de^{-(1+\epsilon) z f(x)} \nonumber \\
  & = x_\epsilon^{-(1+\epsilon)zf(x_\epsilon)} + \int_{x_\epsilon}^\infty x de^{-(1+\epsilon) z f(x)} 
   \eqdef I_1 + I_2. \label{eq:mI1I2}
\end{align}
For $y=f(x)$, we have
\begin{align} \label{eq:I2P2}
 I_2 & = \int_{x_\epsilon}^\infty x e^{-(1+\epsilon) z f(x)} (-(1+\epsilon) z f^\prime (x)) dx \nonumber\\
  &= \int_0^{f(x_\epsilon)}  (1+\epsilon) z f^\leftarrow(y) e^{- (1+\epsilon) z y}dy,
\end{align}
where $f^\leftarrow$ is the inverse function of $f$.
Let $g(x)=1/x$, $h(x)=g\comp f(x)$, and $l_1(x)={l(x)}^{-1/\alpha}$. We have
$h(x) \sim x^\alpha {l_1}^\alpha(x)$. 
By Proposition~1.5.15 of~\cite{regularVariation}, we obtain the asymptotic inverse of $h$, 
\begin{align}
 h^\leftarrow (x) \sim x^{1/\alpha}l_1^\# (x^{1/\alpha}). \nonumber
\end{align}
Recall $l_{n+1}(x) \eqdef l_1(x/l_n(x)), n=1, 2,\ldots$ and $l_n(x) \sim l_{n+1}(x)$ as $x \to \infty$ for some $n\geq2$. 
Using Proposition~2.3.5 of~\cite{regularVariation}, we have $l_1^\# \sim 1/l_n(x)$.
Therefore, 
\begin{align}
 h^\leftarrow (x) \sim x^{1/\alpha}/l_n (x^{1/\alpha}). \nonumber
\end{align}
Since $h=g \comp f$, we have 
\begin{align}
 h^\leftarrow(x) = f^\leftarrow (g^\leftarrow (x)) & = f^\leftarrow (1/x). \nonumber
\end{align}
implying, as $x \to 0$,
\begin{align}
 f^\leftarrow (x) = h^\leftarrow (1/x) \sim \frac{1}{x^{1/\alpha}l_n (x^{-1/\alpha})}. \nonumber 
\end{align}
For $y=f(x_\epsilon)$ small enough, we have
\begin{align}
  \frac{1-\epsilon}{y^{1/\alpha}l_n (y^{-1/\alpha})} < f^\leftarrow (y) < \frac{1+\epsilon}{y^{1/\alpha}l_n (y^{-1/\alpha})}. \label{eq:finverseup}
\end{align}
Combining (\ref{eq:I2P2}) and (\ref{eq:finverseup}) yields 
\begin{align}
 I_2 & < {(1+\epsilon)}^2 z \int_0^{f(x_\epsilon)} \frac{e^{- zy}}{y^{1/\alpha}l_n (y^{-1/\alpha})} dy \nonumber\\
      & <  {(1+\epsilon)}^2 z \int_0^\infty \frac{e^{- zy}}{y^{1/\alpha}l_n (y^{-1/\alpha})} dy \nonumber.
\end{align}
Using Theorem~$1.7.1^\prime$ of \cite{regularVariation}, we obtain, as $z \to \infty$,
\begin{align}
  z \int_0^\infty \frac{e^{- zy}}{y^{1/\alpha}l_n (y^{-1/\alpha})} dy
  \sim \Gamma(1-1/\alpha) z^{1/\alpha} / l_n(z^{1/\alpha}), \label{eq:LaplaceT}
\end{align}
implying
\begin{align}
 I_2 < {(1+\epsilon)}^3  \Gamma(1-1/\alpha) z^{1/\alpha} / l_n(z^{1/\alpha}). \nonumber
\end{align}
Therefore, using (\ref{eq:mI1I2}),
we have, for $z$ large enough,
\begin{align}
 m(z) \lesssim & {(1+\epsilon)}^3  \Gamma(1-1/\alpha) z^{1/\alpha} / l_n(z^{1/\alpha}) \nonumber\\
 &+ x_\epsilon^{-(1+\epsilon)zf(x_\epsilon)}. \label{ineq:m_upperbound}
\end{align}

Next, we prove a lower bound for $m(z)$.
Recalling (\ref{ineq:lzipf}) and using a similar approach as in~(\ref{eq:mI1I2}), we have
\begin{align}
 m(z) 
  & \gtrsim \int_{x_\epsilon}^\infty \left( 1-e^{- (1-\epsilon) z f(x)} \right)dx \nonumber\\
   &=  x \left. \left(1-e^{- (1-\epsilon) z f(x)} \right) \right|_{x_\epsilon}^\infty 
   + \int_{x_\epsilon}^\infty x de^{-(1-\epsilon) z f(x)} \nonumber \\
   & > \int_0^{f(x_\epsilon)}  (1-\epsilon) z f^\leftarrow(y) e^{- (1-\epsilon) z y}dy \nonumber \\
  & = \left( \int_0^\infty  - \int_{f(x_\epsilon)}^\infty \right) (1-\epsilon) z f^\leftarrow(y) e^{- (1-\epsilon) z y}dy \nonumber \\
  & > \int_0^\infty  (1-\epsilon) z f^\leftarrow(y) e^{- (1-\epsilon) z y}dy - x_\epsilon e^{-(1-\epsilon) z f(x_\epsilon)}. \nonumber 
\end{align}
Using (\ref{eq:finverseup}) and (\ref{eq:LaplaceT}), we obtain
\begin{align}
 m(z) &  \gtrsim {(1-\epsilon)}^2 z \int_0^{f(x_\epsilon)} \frac{e^{- zy}}{y^{1/\alpha}l_n (y^{-1/\alpha})} dy  \nonumber \\
 & \hspace{4mm} - x_\epsilon e^{-(1-\epsilon) z f(x_\epsilon)} \nonumber \\
 & \gtrsim {(1-\epsilon)}^3  \Gamma(1-1/\alpha) z^{1/\alpha} / l_n(z^{1/\alpha}) \nonumber\\
 & \hspace{4mm}- x_\epsilon e^{-(1-\epsilon) z f(x_\epsilon)}. \label{ineq:m_lowerbound}
\end{align}
Combining (\ref{ineq:m_upperbound}) and (\ref{ineq:m_lowerbound}), and passing $z \to \infty$ and then $\epsilon \to 0$, we obtain
\begin{align}
 m(z) \sim \Gamma(1-1/\alpha) z^{1/\alpha} / l_n(z^{1/\alpha}). \label{eq:m_asy}
\end{align}

Define 
\begin{align}
F(z) =  \frac{(\alpha-1)z^{\alpha-1}} {{\Gamma(1-1/\alpha)}^{\alpha-1}l(z)}. \nonumber
\end{align}
Now, we show $F(z) \sim \Phi_k(m^\leftarrow(z))$ as $z \to \infty$,
which is equivalent to $F(m(x^\alpha/l(x))) \sim \Phi(x^\alpha/l(x))$ as $x \to \infty$.
Using (\ref{eq:m_asy}), we obtain
\begin{align}
 & F(m(x^\alpha/l(x))) = \frac{\alpha-1}{{\Gamma(1-1/\alpha)}^{\alpha-1} }   \frac {{m(x^\alpha/l(x))}^{\alpha-1}} {l(m(x^\alpha/l(x))) } \nonumber \\
  & \sim  \frac{(\alpha-1) x^{\alpha-1} {l_1(x)}^{\alpha-1}} { {l_n(xl_1(x))}^{\alpha-1}}  
  \left/  l \left( \frac{\Gamma(1-1/\alpha)xl_1(x)}{l_n(xl_1(x))} \right) \right. \nonumber \\
  & \sim \frac {(\alpha-1) x^{\alpha-1} {c(x)}^{\alpha-1} }  {l(xc(x))}, \label{eq:Fm}
\end{align}
where $c(x)=l_1(x)/l_n(xl_1(x))$.

Recall $l_n(y) \sim l_{n+1}(y)=l_1(y/l_n(y)), y \to \infty$.
For $y=xl_1(x)$,  by Proposition~1.5.15 of~\cite{regularVariation}, we obtain $$x \sim y l_1^\#(y) \sim y/ l_n(y),$$
which implies
\begin{align}
 l_n(x l_1(x)) & = l_n(y) 
  \sim  l_1(y/l_n(y)) 
  \sim l_1(x).  \nonumber
\end{align}
Therefore, we obtain 
\begin{align}
 \lim_{x \to \infty} c(x) = \lim_{x \to \infty} \frac{l_1(x)}{l_n(x l_1(x))} =1.  \label{eq:c_asy}
\end{align}

Combining (\ref{eq:relation}), (\ref{eq:Fm}) and (\ref{eq:c_asy}) yields 
$$\Phi_k(x^\alpha/l(x)) \sim F(m(x^\alpha/l(x))),$$
which implies
\begin{align}
 \Phi_k(m^\leftarrow(z)) & \sim F(z)
  = \frac{(\alpha-1)z^{\alpha-1}} {{\Gamma(1-1/\alpha)}^{\alpha-1}l(z)}, \; \text{as}\, z\to \infty. \nonumber 
\end{align}

Therefore, by Theorem~\ref{theorem:missP}, we obtain, as $x\to \infty$, 
\begin{align}\label{eq:missP2P2}
 \Pr[C_0>x] \sim \frac{\Gamma(2-1/\alpha) \Gamma(1-1/\alpha)^{\alpha-1}}{\alpha-1}\frac{l(x)}{x^{\alpha-1}}.
\end{align}   
Combining~(\ref{eq:missP1P2}) and (\ref{eq:missP2P2}) finishes the proof.

\subsection{Proof of Theorem~\ref{lemma:che}} \label{ss:p:lemma:che}
The Che approximation gives,  for a LRU cache of size $x$,
 \begin{align}\label{eq:che}
 \Pr_{che}[C_0>x|I_0=k] = \sum_{i=1}^\infty q_i^{(k)} e^{-p_i^{(k)}T}, 
\end{align}
where $T$ is the characteristic time that is the unique solution to
 $\sum_{i=1}^\infty ( 1- e^{-p_i^\circ T})=x$. 
By Lemma~\ref{lemma:decomposition},
we obtain
\begin{align} \label{eq:approxT}
 T \sim m^\leftarrow(x).
\end{align}
Combining (\ref{eq:rep_in}), (\ref{eq:firstStep}) and using $e^{-y} \leq 1-y$, we derive a lower bound of (\ref{eq:che}),
\begin{align} \label{eq:chelower}
 \Pr_{che}  [C_0>x|I_0=k] & \geq \sum_{i=1}^{\infty} q_i^{(k)}\left(1-p_i^{(k)} \right)^{T} \nonumber \\
 &  = \Pr[\sigma>T | I_0=k]  \nonumber \\
 & \sim \frac{\Gamma(\beta_k+1)}{\Phi_k(T)}.
\end{align}
Next, we derive an upper bound.
Using a similar approach that proves an upper bound for $\Pr[\sigma>n|I_0=k]$ in the proof of Theorem~\ref{theorem:missP}, we have
 \begin{align}\label{eq:cheupper}
  \Pr_{che}[  & C_0  >x|I_0=k]  = \left( \sum_{i=1}^{i_{\epsilon_1}} + \sum_{i=i_{\epsilon_1}+1}^{\infty} \right)  q_i^{(k)} e^{-p_i^{(k)}T}  \nonumber \\
  & \leq e^{-p_{i_{\epsilon_1}}^{(k)}T} +  \sum_{i=i_{\epsilon_1}+1}^{\infty} q_i^{(k)} e^{-p_i^{(k)}T}  \lesssim \frac{\Gamma(\beta_k+1)}{\Phi_k(T)},
 \end{align}
 where $\epsilon_1$ is defined in (\ref{eq:equivBound2}).
Combining (\ref{eq:chelower}) and (\ref{eq:cheupper}) yields, as $x \to \infty$,
 \begin{align} \label{eq:che2}
  \Pr_{che}[C_0>x|I_0=k] \sim \frac{\Gamma(\beta_k+1)}{\Phi_k(T)}.
 \end{align}
 Combining (\ref{eq:approxT}), (\ref{eq:che2}) and using the fact 
 $$\lim_{x\to \infty} x^{\beta_k} l_k(x)/\Phi_k(x)=1,$$ 
 we complete the proof.

\bibliographystyle{abbrv}
\bibliography{mybib}

\end{document}